\newcommand{\textchange}[1]{#1}
\newtheorem{lemma}{Lemma}
\newtheorem{proposition}{Proposition}
\newtheorem{remark}{Remark}
\newcommand{\qed}{\hfill$\square$}
\newenvironment{proof}{%
  \noindent{\em Proof.\ }}{%
  \hspace*{\fill}\qed \\
  \vspace{2ex}}
\DeclareMathAlphabet{\bm}{OML}{cmm}{b}{it}
\newcommand{\ket}[1]{| #1 \rangle} 
\newcommand{\bra}[1]{\langle #1 |}
\newcommand{\rom}[1]{\mathrm{#1}}
\newcommand{\san}[1]{\mathsf{#1}}
\newcommand{\argmax}{\mathop{\rm argmax}\limits}
\begin{document}

\title{Tomography increases key rates of quantum-key-distribution protocols}
\thanks{To be published in Physical Review A.}
\author{Shun Watanabe}
 \email{shun-wata@it.ss.titech.ac.jp}
\author{Ryutaroh Matsumoto}%
 \email{ryutaroh@rmatsumoto.org}
 \homepage{http://www.rmatsumoto.org/research.html}
\author{Tomohiko Uyematsu}
 \email{uyematsu@ieee.org}
 \affiliation{%
Department of Communications and Integrated Systems \\
Tokyo Institute of Technology \\
2-12-1, Oookayama, Meguro-ku, Tokyo, 152-8552, Japan
}

\received{18 February 2008}
\revised{16 July 2008}
\accepted{18 July 2008}

\begin{abstract}
We construct a practically implementable classical processing
for the BB84 protocol and the six-state protocol
that fully utilizes the accurate channel estimation method,
which is also known as the quantum tomography.
Our proposed processing yields at least   
as high key rate as the standard processing by Shor and Preskill.  
We show two examples of quantum channels over which
the key rate of our proposed processing is strictly 
higher than the standard processing. 
In the second example, the BB84 protocol with our proposed processing
yields a positive key 
rate even though the so-called error rate is higher than
the 25\% limit.
\end{abstract}

\pacs{03.67.Dd, 89.70.-a, 03.65.Wj}
\maketitle

\section{Introduction}         

Quantum key distribution (QKD) has attracted great attention
as an unconditionally secure key distribution scheme.
The fundamental feature of QKD protocols is that
the amount of information gained by an eavesdropper,
usually referred to as Eve, can be estimated
from the channel between the legitimate sender and the receiver,
usually referred to as Alice and Bob respectively.
Such a task cannot be conducted in classical key
distribution schemes.
If the estimated amount is lower than a threshold,
then Alice and Bob determine the length of a secret key
from the estimated amount of Eve's information,
and can share a secret key by performing the 
information reconciliation (error correction)
\cite{brassard:94, bennett:88} 
and the privacy amplification \cite{bennett:88, bennett:95}.
Since the key rate, which is the length of securely sharable key per channel
use, is one of the most important criteria for the efficiency of QKD protocols,
the estimation of the channel is of primary importance.

In this paper, we only treat the BB84 protocol \cite{bennett:84} and
the six-state protocol \cite{bruss:98}, and
we mean the BB84 protocol and the six-state protocol 
by the QKD protocols throughout the paper.
Furthermore, a classical processing consists of 
a procedure to  determine 
a key rate from a channel estimate and a procedure for 
the information reconciliation and the privacy amplification. 

Mathematically, quantum channels are described by 
trace preserving completely positive (TPCP) maps
\cite{nielsen-chuang:00}.
Conventionally in the QKD protocols, we only use 
the statistics of matched measurement outcomes,
which are transmitted and received by the same basis,
to estimate the TPCP map describing the quantum channel;
mismatched measurement outcomes, which are
transmitted and received by different bases, are discarded
in the conventionally used channel estimation methods.
By using the statistics of mismatched measurement outcomes
in addition to that of matched measurement outcomes, we can estimate 
the TPCP map more accurately than the conventional 
estimation method. Such an accurate channel estimation method 
is also known as the quantum tomography \cite{chuang:97, poyatos:97}.
In early 90s, Barnett et al.~\cite{barnett:93} 
showed that the use of mismatched measurement
outcomes enables Alice and Bob to detect the presence of Eve with 
higher probability for the so-called intercept and resend attack.
Furthermore, some literatures use the accurate estimation method to ensure
the channel to be a Pauli channel
\cite{bruss:03,liang:03, kaszlikowski:05, kaszlikowski:05b},
where a Pauli channel is a channel over which four kinds of 
Pauli errors (including the identity) occur probabilistically.
However the channel is not necessarily a Pauli channel.

The use of the accurate channel estimation method 
in a classical processing has a potential 
to improve the key rates of previously known classical processing.
However, there is no proposed practically implementable classical processing that 
fully utilizes the accurate estimation method.
Recently, Renner et al.~\cite{renner:05, renner:05b, kraus:05}
developed information theoretical techniques to
prove the security of the QKD protocols. 
Their proof techniques can be used to prove the security
of the QKD protocols with a classical processing that fully utilizes
the accurate estimation method.
However they only considered Pauli channels or
partial twirled  
channels\footnote{By the partial twirling (discrete twirling)
\cite{bennett:96b}, any channel becomes a Pauli channel.}.
For Pauli channels, the accurate estimation method
and the conventional estimation method make no difference.

In this paper,
we construct a practically implementable 
classical processing that fully utilizes the accurate channel estimation method.
More precisely, we present a procedure to determine a key rate
based on the accurate channel estimate
for the BB84 protocol 
and the six-state protocol respectively.
Then we also present a practically implementable procedure for the information
reconciliation and the privacy amplification 
in which we can share a secret key at the determined key rate.
Note that we only change the classical processing of the QKD protocols,
and the method of the transmission and reception of quantum systems in the
QKD protocols
remain unchanged. 

Although it is straight forward to determine a key rate from
the accurate channel estimate for the six-state protocol,
it is subtle how to determine a key rate from 
the accurate channel estimate for the BB84 protocol.
More specifically, we can obtain
only partial parameters describing the channel,
and there remain some free parameters. Thus we have
to consider the worst case, i.e., the key rate that
is minimized over the free parameters. We shall show an
explicit procedure to determine the minimized key rate.

Our proposed processing yields at least as high key rate
as the standard processing by Shor and Preskill \cite{shor:00}.
As examples, we show that the key rate of our proposed classical processing
is strictly higher than that of the standard processing for the amplitude 
damping channel and the rotation channel, which are unitary 
channel that rotate 
the Bloch sphere in the $\san{z}$-$\san{x}$ plane.
In the example of the amplitude damping channel,
we show that the key rate of the so-called reverse 
reconciliation\footnote{The reverse reconciliation was originally 
proposed by Maurer \cite{maurer:93} in the classical key agreement.},
in which the key is generated based on Bob's bit sequence,
is higher than the key rate of the direct reconciliation,
in which the key is generated based on Alice's bit 
sequence\footnote{For QKD protocols with weak coherent states,
literatures \cite{boileau:05, hayashi:07} already pointed out 
that the key rate of the direct
reconciliation and the reverse reconciliation are different.}. 
In the example of the rotation channel, we solve a
problem left open in \cite[Section 5]{matsumoto:07}---the
problem whether it is possible to obtain positive key rates
from both matched measurement outcomes and 
mismatched measurement outcomes for the BB84 protocol. 

It is believed that we cannot share any secret key if
the so-called error rate is higher than the $25$\% limit 
in the BB84 protocol \cite{gottesman:03}.
However Curty et al.~\cite{curty:04} suggested that,
for some asymmetric error patterns, it might be possible to
share a secret key even for the error
rates above the $25$\% limit.
In the example of the rotation channel, we show 
that we can actually obtain a positive key rate even
though the error rate is higher than the $25$\% limit.

Devetak and Winter \cite{devetak:04} also showed the key rate formula
that coincide with the key rate formula shown 
by Renner et al.~\cite{renner:05, renner:05b, kraus:05} if 
we know the channel exactly.
By combining our proposed procedure to determine a key rate
based on the accurate channel estimate and
Devetak and Winter's procedure for the 
information reconciliation and the privacy amplification,
we can obtain the same key rate as in this paper.
However the procedure for the information reconciliation
and the privacy amplification
shown by Devetak and Winter is not
practically implementable.

Our proposed information reconciliation can be
implemented by any efficiently decodeable linear code
for the Slepian-Wolf coding \cite{slepian:73}.
For example, we can use the low density parity check matrix (LDPC)
code \cite{gallager:63}.

The rest of this paper is organized as follows:
We first present a procedure for the information reconciliation and
the privacy amplification in Section \ref{section-protocol}.
Then we present a procedure to 
determine a key rate from the estimate of the channel
in Section \ref{section-analysis-of-key-rate}.
\textchange{We consider 
the amplitude damping channel, the unital channel, 
and the rotation channel as
examples,
and show that the key rate of our proposed processing
is higher than the standard processing
in Section \ref{section-example}.} 
We state the conclusion in Section \ref{section-conclusion}.

In this paper, we mainly consider 
standard procedures for the information reconciliation
and the privacy amplification with
one-way classical communication, i.e., we 
do not treat, except in Remarks \ref{remark-degradable} 
and \ref{remark-degraded}, 
the noisy preprocessing \cite{renner:05, kraus:05}  
nor a procedure 
with two-way classical communication \cite{gottesman:03, watanabe:07}.
However, our results in this paper can be  easily extended
to procedures with the noisy preprocessing
and two-way classical communication 
(see Remark \ref{remark-extention}).

\section{Information reconciliation and privacy amplification}
\label{section-protocol}

We construct practical procedure for the information
reconciliation and the privacy amplification  in this section.
We first describe our proposed procedure with general linear codes and
the maximum a posteriori
probability (MAP) decoding. 
Then as an example of efficiently decodeable linear code, we show how to apply  
the sum-product algorithm 
of the low density parity check matrix (LDPC) 
code\footnote{It should be noted that the application of the LDPC codes
for classical key agreement protocols has been considered
by Muramatsu \cite{muramatsu:06b}, in which he uses the LDPC code
as the Slepian-Wolf source coding.} 
to our proposed procedure 
in Remark \ref{remark-ldpc}. 

For the simplicity we assume that Eve's attack is  the collective 
attack\footnote{This assumption is not essential.
By using the de Finetti representation
arguments \cite{renner:05b, renner:07}, our result can be extended
to the coherent attack.}, i.e., the channel connecting
Alice and Bob is given by tensor products of a channel $\mathcal{E}_B$
from a qubit density matrix to itself.
As is usual in QKD literatures, we assume that Eve can access 
all the environment of channel $\mathcal{E}_B$;
the channel to the environment is denoted by $\mathcal{E}_E$. 

In the six-state protocol,
Alice randomly sends bit $0$ or $1$ to Bob by modulating it
into a transmission basis that is randomly chosen from
the $\san{z}$-basis $\{ \ket{0_\san{z}}, \ket{1_\san{z}} \}$,
the $\san{x}$-basis $\{ \ket{0_\san{x}}, \ket{1_\san{x}} \}$, 
or the $\san{y}$-basis $\{ \ket{0_\san{y}}, \ket{1_\san{y}} \}$,
where $\ket{0_\san{a}}, \ket{1_\san{a}} $ are eigenstates of the Pauli 
matrix $\sigma_\san{a}$ for $\san{a} \in \{\san{x},\san{y},\san{z}\}$ respectively.  
Then Bob randomly chooses one of measurement observables
$\sigma_\san{x}$, $\sigma_\san{y}$, and $\sigma_\san{z}$, and converts
a measurement result $+1$ or $-1$ into
a bit $0$ or $1$ respectively.
After a sufficient number of transmissions, Alice
and Bob publicly announce their transmission bases and
measurement observables. 
They also announce a part of their bit sequences for
estimating channel $\mathcal{E}_B$.
Note that Alice and Bob do not discard mismatched measurement outcomes,
which are transmitted and received by different bases,
to estimate the channel accurately.

In the BB84 protocol, Alice only uses $\san{z}$-basis and
$\san{x}$-basis to transmit the bit sequence, 
and Bob only uses observable $\sigma_\san{z}$ and 
$\sigma_\san{x}$ to receive the bit sequence. 

Henceforth, we only treat Alice's bit sequence
$\mathbf{x} \in \mathbb{F}_2^n$ that is transmitted in $\san{z}$-basis
and corresponding Bob's bit sequence $\mathbf{y} \in \mathbb{F}_2^n$
that is received in $\sigma_\san{z}$-measurement, where $\mathbb{F}_2$ is
the finite field of order $2$.
Furthermore, we occasionally omit the subscripts $\{ \san{x}, \san{y}, 
\san{z} \}$ 
of bases, and the basis $\{ \ket{0}, \ket{1} \}$ is regarded as
$\san{z}$-basis unless otherwise stated.
Since the pair of sequences $(\mathbf{x}, \mathbf{y})$ is
transmitted and received in $\san{z}$-basis, they are independently identically distributed 
according to 
\begin{eqnarray}
\label{eq-distribution-xy}
P_{XY}(x,y) := \frac{1}{2} \bra{y_\san{z}} \mathcal{E}_B(
	\ket{x_\san{z}}\bra{x_\san{z}}) \ket{y_\san{z}}.
\end{eqnarray} 
Note that the distribution $P_{XY}$ can be estimated 
from the statistics of the sample bits that are transmitted by $\san{z}$-basis
and received by $\sigma_\san{z}$-observable.
 
Before describing our proposed procedure, we should review the basic
facts of linear codes.
An $[n,n-m]$ classical linear code $\mathcal{C}$ is an 
$(n-m)$-dimensional linear subspace of $\mathbb{F}_2^n$, and its
parity check matrix $M$ is an $m \times n$ matrix
of rank $m$ with $0,1$ entries such that $M \mathbf{c} = \mathbf{0}$
for any codeword $\mathbf{c} \in \mathcal{C}$.
By using these preparations, our proposed procedure is described as follows.
\begin{enumerate}
\renewcommand{\theenumi}{\roman{enumi}}
\renewcommand{\labelenumi}{(\theenumi)}

\item \label{step1}
Alice calculates syndrome $\mathbf{t} := M \mathbf{x}$,
and sends it to Bob over the public channel.

\item \label{step2}
Bob decodes $(\mathbf{y}, \mathbf{t})$ into estimate $\hat{\mathbf{x}}$ of
$\mathbf{x}$ by using the maximum a posteriori probability (MAP) decoding.
More precisely, Bob selects $\hat{\mathbf{x}} \in \mathbb{F}_2^n$ such that
$M \hat{\mathbf{x}} = \mathbf{t}$ and
a posteriori probability $P_{X|Y}^n(\hat{\mathbf{x}}|\mathbf{y})$ is
maximized (if there exist tied sequences, then he selects 
the smallest one with respect to the lexicographic order),
where $P_{X|Y}^n$ is the $n$th product distribution of $P_{X|Y}$.

\item \label{step3}
Alice randomly choose a hash function $f: \mathbb{F}_2^n \to \mathcal{S}_n$
from a set of universal hash functions \cite{carter:79},
and sends the choice to Bob over the public channel. Then Alice
and Bob's final keys are 
$s_A := f(\mathbf{x})$ and $s_B := f(\hat{\mathbf{x}})$ respectively.
\end{enumerate}

If we set the rate of syndrome  as 
\begin{eqnarray}
\label{rate-of-syndrome}
\frac{m}{n} > H(X|Y),
\end{eqnarray}
then there exists a linear code in the LDPC codes
such that Bob's decoding error probability is arbitrary small for sufficiently
large $n$ \cite[Theorem 2]{muramatsu:05}, 
where $H(X|Y)$ is the conditional entropy with respect to
the joint probability distribution $P_{XY}$ \cite{cover}.
Note that the base of a logarithm and a (conditional) entropy
are $2$ throughout the paper.

The key rate, $\frac{1}{n} \log |\mathcal{S}_n|$, is determined according
to the results 
of privacy amplification \cite[Corollary 3.3.7 and Lemma
6.4.1]{renner:05b}.
Let 
\begin{eqnarray*}
H_{\rho}(X|E) := H(\rho_{XE}) - H(\rho_E) 
\end{eqnarray*}
be the conditional von Neumann entropy 
with respect to density matrix
$\rho_{XE} := \sum_{x \in \mathbb{F}_2} \frac{1}{2} \ket{x}\bra{x}
\otimes \mathcal{E}_E(\ket{x}\bra{x})$,
where $H(\rho)$ is the von Neumann entropy for a density matrix $\rho$.
If the key rate satisfies
\begin{eqnarray}
	\label{eq-key-rate}
\frac{1}{n} \log |\mathcal{S}_n| <  H_{\rho}(X|E) - \frac{m}{n},
\end{eqnarray}
then the final key $S_A$ is secure in the sense of the trace 
distance\footnote{The trace norm of a matrix $A$ is defined by $\| A\| :=
\rom{Tr} \sqrt{A^* A}$. Then the trace distance between two matrices
$A$ and $B$ is defined by $\| A - B\|$.}.
More precisely, the density matrix, $\rho_{S_A \mathbf{T} F E^n}$, which
describes Alice's final key $S_A$, 
the publicly transmitted syndrome $\mathbf{T}$ and hash function $F$, 
and the state in Eve's system $E^n$,
satisfies
\begin{eqnarray*}
\| \rho_{S_A \mathbf{T} F E^n} - \rho_S \otimes \rho_{\mathbf{T} F E^n} \| \le \varepsilon
\end{eqnarray*}
for arbitrary small $\varepsilon$ and sufficiently large $n$, where 
$\rho_S := \sum_{s \in \mathcal{S}_n} \frac{1}{|\mathcal{S}_n|} \ket{s}\bra{s}$ is
the density matrix that describes the uniformly distributed key on
$\mathcal{S}_n$.
From Eqs.~(\ref{rate-of-syndrome}) and (\ref{eq-key-rate}), we find that
\begin{eqnarray}
\label{eq-secure-key-rate}
H_\rho(X|E) - H(X|Y)
\end{eqnarray}
is a secure key rate.

Note that the conditional von Neumann entropy $H_\rho(X|E)$
can be calculated from the channel $\mathcal{E}_B$ as follows.
Since system $X$ is classical, we can rewrite 
$H(\rho_{XE}) = H(X) + \sum_{x \in \mathbb{F}_2} \frac{1}{2} 
H(\mathcal{E}_E(\ket{x}\bra{x}))$.
Noting that  $H(\mathcal{E}_E(\ket{x}\bra{x})) = H(\mathcal{E}_B(\ket{x}\bra{x}))$
and $H(\rho_E) = H((\rom{id} \otimes \mathcal{E}_B)(\psi))$ for
the maximally entangled state 
$\ket{\psi} := \sum_{x \in \mathbb{F}_2} \frac{1}{\sqrt{2}} \ket{x}\ket{x}$,
Eve's ambiguity for Alice's bit, $H_{\rho}(X|E)$, can be calculated from the channel $\mathcal{E}_B$.
How to determine Eve's ambiguity $H_\rho(X|E)$ from a estimate of the
channel $\mathcal{E}_B$
is discussed in the next section.  

\begin{remark}
\label{remark-asymmetric}
If we use the conventionally used method \cite{mayers:01, shor:00}
for decoding $\hat{\mathbf{x}}$, the rate of syndrome $\frac{m}{n}$
cannot be as small as the right hand side of Eq.~(\ref{rate-of-syndrome}).
Thus, the key rate in Eq.~(\ref{eq-secure-key-rate}) cannot be achieved.
Define a probability distribution on $\mathbb{F}_2$ as
\begin{eqnarray}
\label{eq-pw}
P_W(w) := \sum_{y \in \mathbb{F}_2} P_Y(y) P_{X|Y}(y+w|y).
\end{eqnarray}
Then the error $\mathbf{w} := \mathbf{x} + \mathbf{y}$
between Alice and Bob's sequence is distributed according to $P_W^n$. 
In the conventional method, 
Bob calculates the difference of syndromes,
$\mathbf{t} + M \mathbf{y}$, and selects the error
$\hat{\mathbf{w}}$ such that $M \hat{\mathbf{w}} = \mathbf{t} + M
 \mathbf{y}$
and the likelihood of the error 
$P_{W}^n(\hat{\mathbf{w}})$ is maximized.
Then , the estimate for Alice's sequence is 
$\hat{\mathbf{x}} = \mathbf{y} + \hat{\mathbf{w}}$.
The rate of syndrome have to be larger than $H(W)$ 
for the decoding error probability to be small.
By the log-sum inequality \cite{cover} and Eq.~(\ref{eq-pw}), we have
\begin{eqnarray*}
\lefteqn{ H(X|Y) } \\
&=& \sum_{x, y \in \mathbb{F}_2}  P_Y(y) 
    P_{X|Y}(x|y) \log \frac{1}{P_{X|Y}(x|y)} \\
&=& \sum_{w ,y \in \mathbb{F}_2} 
    P_{Y}(y) P_{X|Y}(y+w|y) \log 
       \frac{P_{Y}(y)}{P_{Y}(y) P_{X|Y}(y+w|y)} \\
&\le& \sum_{w \in \mathbb{F}_2} P_W(w) \log
   \frac{1}{P_W(w)} \\
&=& H(W).
\end{eqnarray*}
Thus, the key rate in Eq.~(\ref{eq-secure-key-rate}) cannot be achieved by
the conventional decoding method unless
$P_{X|Y}(w|0)$ equals $ P_{X|Y}(1 + w|1)$ for any $w \in \mathbb{F}_2$.
\end{remark}
\begin{remark}
By switching the role of Alice and Bob, we obtain 
a classical processing that achieves the key rate
\begin{eqnarray}
\label{key-rate-reverse}
H_{\rho}(Y|E) - H(Y|X).
\end{eqnarray}
Such a procedure is usually called the reverse
reconciliation. On the other hand the original procedure is usually called
the direct reconciliation. 
The reverse reconciliation was originally 
proposed by Maurer in the classical key agreement 
context \cite{maurer:93}.

Note that we can calculate
the conditional von Neumann entropy $H_\rho(Y|E) = H(\rho_{YE}) - H(\rho_E)$
from the channel $\mathcal{E}_B$ as follows.
Let $\psi_{ABE}$ be a purification of 
$(\rom{id} \otimes \mathcal{E}_B)(\psi)$,
and let $\rho_{BE} := \rom{Tr}_A[ \psi_{ABE}]$. 
Then, the density matrix $\rho_{YE}$ is derived by measurement on Bob's system,
i.e., 
\begin{eqnarray*}
\rho_{YE} = 
\sum_{y \in \mathbb{F}_2} (\ket{y}\bra{y} \otimes I) \rho_{BE}
(\ket{y}\bra{y} \otimes I).
\end{eqnarray*}

In Section \ref{example1}, we shall show that
the key rate of the reverse reconciliation can be higher than that of 
the direct reconciliation. 
The fact that the key rate of the direct reconciliation
and the reverse reconciliation are different is already pointed
out for QKD protocols with weak coherent states \cite{boileau:05, hayashi:07}. 
\end{remark}
\begin{remark}
We used the MAP decoding instead of
the maximum likelihood (ML) decoding in our procedure,
because the MAP decoding minimizes the decoding error probability, and 
the MAP decoding is different from the ML decoding 
for the reverse reconciliation.
In the ML decoding for the reverse reconciliation, 
Alice selects $\hat{\mathbf{y}} \in \mathbb{F}_2^n$
such that $M \hat{\mathbf{y}}$ equals the syndrome 
$\mathbf{t} = M \mathbf{y}$,  
and that the likelihood
$P_{X|Y}^n(\mathbf{x}|\hat{\mathbf{y}})$ is maximized.
Since the prior probability of Bob's sequence $\mathbf{y}$ 
is not necessarily
the uniform distribution, the ML decoding and the MAP
decoding are not necessarily equivalent, i.e.,
\begin{eqnarray*}
\argmax_{\hat{\mathbf{y}}: M \hat{\mathbf{y}} = \mathbf{t}}
 P_{X|Y}^n(\mathbf{x}|\hat{\mathbf{y}}) =
 \argmax_{\hat{\mathbf{y}}: M \hat{\mathbf{y}} = \mathbf{t}}
 P_{Y|X}^n(\hat{\mathbf{y}}|\mathbf{x})
\end{eqnarray*}
does not hold in general.
\end{remark}
\begin{remark}
\label{remark-unmatched}
By modifying our proposed procedure as follows,
we obtain a procedure in which Alice and Bob can
share a secret key from Alice's sequence $\mathbf{x}$ that is transmitted by $\san{z}$-basis
and corresponding Bob's sequence $\mathbf{y}$ that is received by
$\sigma_\san{x}$-measurement.
Since $(\mathbf{x}, \mathbf{y})$ are independently identically
 distributed according to 
\begin{eqnarray}
\label{eq-distribution-xy-2}
P_{X Y^\prime}(x,y) := \frac{1}{2} \bra{y_\san{x}}
 \mathcal{E}_B(\ket{x_\san{z}}\bra{x_\san{z}}) \ket{y_\san{x}},
\end{eqnarray}
we replace  $P_{X|Y}^n$ in Step (\ref{step2}) with $P_{X|Y^\prime}^n$.
By a similar arguments as in the original procedure,
the secure key rate of the modified procedure is given by
\begin{eqnarray}
\label{key-rate-lower-reverse}
H_\rho(X|E) - H(X|Y^\prime).
\end{eqnarray} 

In Section \ref{example2}, we shall show an example
in which Alice and Bob can share secret keys both from
matched measurement outcomes and mismatched measurement outcomes,
i.e., both Eqs.~(\ref{eq-secure-key-rate}) and
 (\ref{key-rate-lower-reverse}) 
are positive.
\end{remark}
\begin{remark}
\label{remark-ldpc}
The sum product algorithm can be used in
Step (\ref{step2}) of our proposed 
procedure as follows.
For a given sequence $\mathbf{y} \in \mathbb{F}_2^n$,
and a syndrome $\mathbf{t} \in \mathbb{F}_2^m$, define
a function
\begin{eqnarray}
\label{eq-sum-product}
P^*(\hat{\mathbf{x}}) := \prod_{j=1}^n P_{X|Y}(\hat{x}_j|y_j)
 \prod_{k=1}^m \mathbf{1}\left[ \sum_{\ell \in N(k)} 
 \hat{x}_\ell = t_k \right],
\end{eqnarray}
where $N(k) := \{ j \mid M_{kj} = 1\}$ for 
the parity check matrix $M$, and $\mathbf{1}[\cdot]$ is the
indicator function.
The function $P^*(\hat{\mathbf{x}})$ is the non-normalized
a posteriori probability distribution on $\mathbb{F}_2^n$
given $\mathbf{y}$ and $\mathbf{t}$. The sum-product
algorithm is a method to (approximately) calculate
the marginal a posteriori probability, i.e.,
\begin{eqnarray*}
P^*_j(\hat{x}_j) := \sum_{\hat{x}_\ell,\ell \neq j}
  P^*(\hat{\mathbf{x}}).
\end{eqnarray*}
The definition of a posteriori probability in Eq.~(\ref{eq-sum-product})
is the only difference between the decoding for the Slepian-Wolf source coding
and that for the channel coding.
More precisely, we replace \cite[Eq.~(47.6)]{mackay-book} with
Eq.~(\ref{eq-sum-product}), and use the algorithm in 
\cite[Section 47.3]{mackay-book}.
The above procedure is a generalization of \cite{liveris:02},
and a special case of \cite{coleman:06}.

In QKD protocols we should 
minimize the block error probability 
rather than the bit error probability, because a bit error might
propagate to other bits after the privacy amplification.
Although the sum-product algorithm is designed to minimize
the bit error probability, it is known by computer simulations
that the algorithm makes the block error 
probability small \cite{mackay-book}. 
\end{remark}

\section{Procedure for channel estimation}
\label{section-analysis-of-key-rate}

In this section we show procedures to 
estimate Eve's ambiguity $H_{\rho}(X|E)$ for 
the six-state protocol and the BB84 protocol.
We first present general preliminaries 
in Section \ref{preliminaries}.
Then we show procedures for the six-state protocol and 
the BB84 protocol in Sections \ref{six-state} and 
\ref{bb84} respectively.
In Section \ref{subsection-relation-to-partial-twirled-channel},
we clarify the relation between our proposed procedures
for estimating $H_{\rho}(X|E)$ and the conventional
ones. 

Although we explain the procedures to estimate
$H_{\rho}(X|E)$ for the direct reconciliation,
the estimation of $H_{\rho}(Y|E)$ for the reverse
reconciliation can be done in a similar manner.

\subsection{Preliminaries}
\label{preliminaries}

\textchange{ In the Stokes parameterization, the qubit channel $\mathcal{E}_B$
can be described by the affine map parameterized
by $12$ real parameters \cite{fujiwara:98,fujiwara:99}: }
\begin{eqnarray}
\left[ \begin{array}{c}
\theta_{\san{z}} \\ \theta_{\san{x}} \\ \theta_{\san{y}}
\end{array} \right] 
\mapsto
\left[ \begin{array}{ccc}
R_{\san{zz}} & R_{\san{zx}} & R_{\san{zy}} \\
R_{\san{xz}} & R_{\san{xx}} & R_{\san{xy}} \\
R_{\san{yz}} & R_{\san{yx}} & R_{\san{yy}}
\end{array} \right]
\left[ \begin{array}{c}
\theta_{\san{z}} \\ \theta_{\san{x}} \\ \theta_{\san{y}}
\end{array} \right]
+ \left[ \begin{array}{c}
t_{\san{z}} \\ t_{\san{x}} \\ t_{\san{y}} 
\end{array} \right],
\label{eq-affine-map}
\end{eqnarray}
\textchange{where $(\theta_\san{z}, \theta_\san{x}, \theta_\san{y})$
describes a vector in the Bloch sphere \cite{nielsen-chuang:00}}.
\textchange{ For the channel $\mathcal{E}_B$ and each pair of bases
$(\san{a}, \san{b}) \in \{ \san{z}, \san{x}, \san{y} \}^2$,  
define the biases of the outputs as }
\begin{eqnarray*}
Q_{\san{ab}0} &:=& 
\bra{0_{\san{b}}} \mathcal{E}_B( \ket{0_{\san{a}}} \bra{0_{\san{a}}}) \ket{0_{\san{b}}}
- \bra{1_{\san{b}}} \mathcal{E}_B( \ket{0_{\san{a}}} \bra{0_{\san{a}}}) \ket{1_{\san{b}}}, \\
Q_{\san{ab}1} &:=&
\bra{1_{\san{b}}} \mathcal{E}_B( \ket{1_{\san{a}}} \bra{1_{\san{a}}}) \ket{1_{\san{b}}}
- \bra{0_{\san{b}}} \mathcal{E}_B( \ket{1_{\san{a}}} \bra{1_{\san{a}}}) \ket{0_{\san{b}}}.
\end{eqnarray*}
\textchange{ Then, a straight forward calculation shows the relations}
\begin{eqnarray}
\label{eq-relation-between-bias-parameter}
R_{\san{ba}} = \frac{1}{2}(Q_{\san{ab}0} + Q_{\san{ab}1}),~~
t_{\san{b}} = \frac{1}{2}(Q_{\san{ab}0} - Q_{\san{ab}1}).
\end{eqnarray}

\textchange{ The qubit channel $\mathcal{E}_B$ can be also described by
the Choi matrix $\rho_{AB} := (\rom{id} \otimes \mathcal{E}_B)(\psi)$
\cite{choi:75} for the maximally entangled state 
$\ket{\psi} = \frac{1}{\sqrt{2}}(\ket{0}\ket{0} + \ket{1}\ket{1})$.
By using the parameters in Eq.~(\ref{eq-affine-map}),
we can write the Choi matrix $\rho_{AB}$ as }
\begin{widetext}
\begin{eqnarray}
\label{eq-choi-matrix}
\frac{1}{4} \left[ \begin{array}{cccc}
1+R_{\san{zz}} + t_\san{z} 
 & R_{\san{xz}} + t_\san{x} + \mathbf{i}R_{\san{yz}} + \mathbf{i}t_{\san{y}}
 & R_{\san{zx}} - \mathbf{i} R_{\san{zy}} 
 & R_{\san{xx}} + R_{\san{yy}} + \mathbf{i} R_{\san{yx}} - \mathbf{i} R_{\san{xy}} \\
R_\san{xz} + t_\san{x} - \mathbf{i}R_\san{yz} - \mathbf{i} t_\san{y}
 & 1 - R_\san{zz} - t_\san{z}
 & R_\san{xx} - R_\san{yy} - \mathbf{i}R_\san{yx} - \mathbf{i} R_\san{xy}
 & - R_\san{zx} + \mathbf{i} R_\san{zy}    \\
R_\san{zx} + \mathbf{i} R_\san{zy} 
 & R_\san{xx} - R_\san{yy} + \mathbf{i}R_\san{yx} + \mathbf{i} R_\san{xy}
 & 1 - R_\san{zz} + t_\san{z}
 & -R_\san{xy} + t_\san{x}  - \mathbf{i}R_\san{yz} + \mathbf{i} t_\san{y} \\
R_\san{xx} + R_\san{yy} - \mathbf{i}R_\san{yx} + \mathbf{i} R_\san{xy}
 & - R_\san{zx} - \mathbf{i} R_\san{zy}
 & -R_\san{xz} + t_\san{x}  + \mathbf{i} R_\san{yz} - \mathbf{i} t_\san{y}
 & 1 + R_\san{zz} - t_\san{z}    
\end{array} \right],
\end{eqnarray}
\end{widetext}
\textchange{ where $\mathbf{i}$ is the imaginary unit.}

\subsection{Six-state protocol}
\label{six-state}

An ad-hoc approach to estimate
Eve's ambiguity in the six-state protocol is
very simple, because all parameters can be estimated
from the statistics of sampled bits \cite{chuang:97, poyatos:97}. 
\begin{enumerate}
\renewcommand{\theenumi}{\roman{enumi}}
\renewcommand{\labelenumi}{(\theenumi)}

\item \label{estimate-six-state-step1}
\textchange{ By using the statistics of sampled bits and the relation
in Eq.~(\ref{eq-relation-between-bias-parameter}),
Alice and Bob calculate the estimate
$(\tilde{R}, \tilde{t})$ for the parameters of
the channel $\mathcal{E}_B$. }

\item \label{estimate-six-state-step2}
\textchange{ By using Eq.~(\ref{eq-choi-matrix}), Alice and Bob calculate
the corresponding matrix $\tilde{\rho}_{AB}$. } If 
the resulting matrix $\tilde{\rho}_{AB}$ is not a Choi matrix, 
Alice and Bob select a Choi matrix $\hat{\rho}_{AB}$ such that
the Frobenius norm between $\hat{\rho}_{AB}$ and
$\tilde{\rho}_{AB}$ is minimized\footnote{This step can be implemented,
for example, by the convex optimization \cite{boyd-book:04} because
the set of all Choi matrices is a closed convex set.
For more detail, see Appendix \ref{appendix-convex-opt}.}.

\item \label{estimate-six-state-step3}
Alice and Bob calculate an estimator $H_{\hat{\rho}}(X|E)$ for 
Eve's ambiguity $H_\rho(X|E)$.
\end{enumerate}

The validity of this estimation procedure is
shown as follows. Since the estimators in
Step (\ref{estimate-six-state-step1}) converge
to the true parameters in probability
as the number of sampled bits goes to the infinity, 
the matrix $\tilde{\rho}_{AB}$
also converges\footnote{When we consider a convergence of a density
matrix, the convergence is with respect to the trace distance.
On the other-hand, when we consider a convergence of parameters,
we use the Euclidean distance. If estimated parameters converges
to the true values, then the resulting matrix also
converges to the true one, because the convergence of the 
Frobenius norm and that of the trace norm are equivalent.}
to $\rho_{AB}$.
Then the Choi matrix $\hat{\rho}_{AB}$ also
converges to the $\rho_{AB}$.  
Since the conditional entropy is a continuous function,
the estimator $H_{\hat{\rho}}(X|E)$ in Step (\ref{estimate-six-state-step3})
also converges to $H_\rho(X|E)$ in probability
as the number of sampled bits goes to the infinity.

\subsection{BB84 protocol}
\label{bb84}

\textchange{ The estimation of $H_{\rho}(X|E)$ in the BB84 protocol is
much more complicated. When Alice and Bob only use
$\san{z}$-basis and $\san{x}$-basis, the statistics of the input
and the output are irrelevant to the parameters
$(R_{\san{zy}}, R_{\san{xy}}, R_{\san{yz}}, R_{\san{yx}}, R_{\san{yy}}, t_{\san{y}})$.
Thus, we can only estimate the parameters
$\omega = (R_{\san{zz}}, R_{\san{zx}}, R_{\san{xz}}, R_{\san{xx}}, t_{\san{z}}, t_{\san{x}})$,
and we have to consider the worst case for 
the parameters $\omega$, i.e.,  }
\begin{eqnarray}
\label{eq-minimization-1}
F(\omega) 
:= \min_{\tau \in \mathcal{P}^\prime(\omega)} H_{\rho_\tau}(X|E),
\end{eqnarray}
\textchange{ where $\mathcal{P}^\prime(\omega)$ is the set of all 
parameters 
$\tau = (R_{\san{zy}}, R_{\san{xy}}, R_{\san{yz}}, R_{\san{yx}}, R_{\san{yy}}, t_{\san{y}})$
such that the parameters $\omega$
and $\tau$ constitute a qubit channel, and 
$\rho_\tau$ is the Choi matrix corresponding to the parameter $\tau$\footnote{It should be
noted that there are some other papers \cite{ziman:05, ziman:06, ziman:08} that 
consider the situation in which we have to estimate a channel 
from partially estimated parameters as above.
However, the methods in these papers cannot be used in our problem.}. }

By using the following proposition, 
which is proved in Appendix \ref{appendix-a}, we can make
the desired function 
$F(\omega)$ 
into a simpler form.
\begin{proposition}
\label{proposition-non-imaginary}
\textchange{ The minimization in Eq.~(\ref{eq-minimization-1}) is achieved when 
the parameters, $R_{\san{zy}}$, $R_{\san{xy}}$, $R_{\san{yz}}$, $R_{\san{yx}}$,
and $t_{\san{y}}$, are $0$. }
\end{proposition}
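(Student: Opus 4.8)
The plan is to show that setting $R_{\san{zy}}=R_{\san{xy}}=R_{\san{yz}}=R_{\san{yx}}=t_{\san{y}}=0$ does not increase $H_{\rho_\tau}(X|E)$, so that the worst case over $\tau \in \mathcal{P}^\prime(\omega)$ is attained on this subfamily. The object to control is $H_{\rho_\tau}(X|E) = H(X) + \frac12\sum_x H(\mathcal{E}_B(\ket{x}\bra{x})) - H(\rho_E)$, where $\rho_E$ is the environment state of the channel with Choi matrix $\rho_\tau$. Since $H(\rho_E)=H(\rho_{AB})$ (for the pure global state, the environment's entropy equals that of the $AB$ part, which is the Choi matrix $\rho_\tau$ up to normalization), minimizing $H_{\rho_\tau}(X|E)$ amounts to balancing two competing effects when we vary the five ``imaginary-axis'' parameters: the change in $\sum_x H(\mathcal{E}_B(\ket{x}\bra{x}))$ (the purity of the outputs on the fixed $\san{z}$-basis inputs) against the change in $H(\rho_\tau)$ (the entropy of the Choi matrix). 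First I would observe that the $\san{z}$-basis output states $\mathcal{E}_B(\ket{0}\bra{0})$ and $\mathcal{E}_B(\ket{1}\bra{1})$ are determined entirely by the column $(R_{\san{zz}}+t_{\san{z}}, R_{\san{xz}}+t_{\san{x}}, R_{\san{yz}}+t_{\san{y}})$ and its sign-flipped analogue — that is, the first term $\frac12\sum_x H(\mathcal{E}_B(\ket{x}\bra{x}))$ depends on $\tau$ only through $R_{\san{yz}}$ and $t_{\san{y}}$, and is otherwise independent of $R_{\san{zy}},R_{\san{xy}},R_{\san{yx}}$.

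Next I would treat the two groups separately. For $R_{\san{zy}}, R_{\san{xy}}, R_{\san{yx}}$: these appear only in $H(\rho_\tau)$, so it suffices to show that the minimum of $H(\rho_\tau)$ over the admissible set, holding $\omega$ and the remaining $\tau$-parameters fixed, is attained when they vanish. The natural route is a symmetrization/conjugation argument: conjugating the Choi matrix \eqref{eq-choi-matrix} by a suitable diagonal or Pauli-type unitary (acting so as to flip the signs of the off-diagonal blocks carrying $\mathbf{i}R_{\san{zy}}$, etc., while preserving complete positivity and the value of $\omega$) produces another admissible Choi matrix with the same entropy; averaging the original and the conjugated matrix keeps $\omega$ fixed, kills the targeted parameters by the sign flip, and does not increase entropy by concavity of $H$. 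I would check that such a sign-flipping conjugation indeed exists and fixes $\omega$ by inspecting which matrix entries in \eqref{eq-choi-matrix} change sign; the relevant unitaries are $\sigma_{\san{z}}\otimes\sigma_{\san{z}}$, $\sigma_{\san{x}}\otimes\sigma_{\san{x}}$ and the like, whose action on the Stokes parameters is standard (they flip the sign of exactly the $\san{y}$-involving entries in pairs). For $R_{\san{yz}}$ and $t_{\san{y}}$: here both terms of $H_{\rho_\tau}(X|E)$ move, so the averaging must be done in a way that simultaneously fixes the $\san{z}$-basis outputs; I would use the conjugation $\sigma_{\san{z}}\otimes\sigma_{\san{z}}$ (or complex conjugation of the Choi matrix in the computational basis, i.e. the transpose, which is a valid channel-preserving operation here), which flips $R_{\san{yz}}\to-R_{\san{yz}}$ and $t_{\san{y}}\to-t_{\san{y}}$ while leaving $R_{\san{zz}},R_{\san{xz}},t_{\san{z}},t_{\san{x}}$ — hence $\sum_x H(\mathcal{E}_B(\ket{x}\bra{x}))$ — invariant; then concavity of $H(\rho_E)=H(\rho_\tau)$ again gives the bound, and the midpoint of the two admissible channels lies in $\mathcal{P}^\prime(\omega)$ with the two parameters zeroed out.

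The main obstacle I anticipate is the bookkeeping: verifying that the sign-flipping unitaries do exactly what is claimed on all twelve Stokes parameters simultaneously (so that $\omega$ is genuinely preserved while precisely the five listed parameters are negated), and confirming that the arithmetic mean of two Choi matrices corresponding to channels with the same $\omega$ is again a Choi matrix with that same $\omega$ — the latter is immediate since the set of Choi matrices is convex and $\omega$ is an affine function of the matrix entries. A secondary subtlety is that one cannot flip all five parameters with a single unitary in general; one may need to combine two or three conjugations (and possibly the transpose) and average over the resulting small group, invoking concavity of von Neumann entropy for the multi-term average. Once the symmetry group and its action on \eqref{eq-affine-map}–\eqref{eq-choi-matrix} are pinned down, the rest is concavity of $H$ plus the observation that $H(X)$ is constant, which together yield $H_{\rho_{\tau_0}}(X|E) \le H_{\rho_\tau}(X|E)$ for $\tau_0$ the projection of $\tau$ onto the subfamily with the five parameters set to zero.
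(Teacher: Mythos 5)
Your overall strategy---negate the five ``imaginary'' parameters by a symmetry of the Choi matrix (complex conjugation of Eq.~(\ref{eq-choi-matrix}) does exactly this in one shot, so no product of several Pauli conjugations is needed), average the two channels, and argue that the average has no larger $H_{\rho}(X|E)$---is the same as the paper's. The gap is in how you justify the last step. You split $H_{\rho_\tau}(X|E)=H(X)+\frac{1}{2}\sum_x H(\mathcal{E}_B(\ket{x}\bra{x}))-H(\rho_{AB})$ and try to control each term by concavity of the von Neumann entropy. For $R_{\san{zy}},R_{\san{xy}},R_{\san{yx}}$ this can be made to work: the $\san{z}$-basis outputs depend only on the first column of $R$ and on $t$, so the middle term is literally unchanged, and concavity gives $H(\rho'_{AB})\ge H(\rho_{AB})$, hence $-H(\rho'_{AB})\le -H(\rho_{AB})$ (note that you state the direction backwards: averaging does not \emph{decrease} entropy, and correspondingly you need the \emph{maximum}, not the minimum, of $H(\rho_\tau)$ to sit at the symmetric point, since it enters with a minus sign). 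But for $R_{\san{yz}}$ and $t_{\san{y}}$ the term-by-term argument genuinely fails: the $\san{y}$-components of the two output Bloch vectors are $\pm R_{\san{yz}}+t_{\san{y}}$, so averaging the channel with its conjugate shortens those Bloch vectors and strictly \emph{increases} $\sum_x H(\mathcal{E}_B(\ket{x}\bra{x}))$, while concavity simultaneously increases $H(\rho_{AB})$. Both terms move upward and their difference is not controlled by concavity of each piece separately; ``both terms move'' is exactly the obstacle you flag, and invoking ``concavity of $H(\rho_E)=H(\rho_\tau)$'' once more does not resolve it.

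The missing ingredient is the paper's Lemma \ref{proposition-convexity}: convexity of the map $\mathcal{E}_B\mapsto H_{\rho}(X|E)$ itself, where Eve holds the \emph{full environment of the mixed channel}. This is proved not by splitting into two entropies but by purifying each Choi matrix, forming a coherent superposition with a which-channel register $R$ as a purification of the mixture, and applying the data-processing inequality to the measurement of $R$, giving $H_{\rho'}(X|ER)\le \frac{1}{2}H_{\rho^1}(X|E)+\frac{1}{2}H_{\rho^2}(X|E)$. Combined with the observation that the conjugate channel has the same ambiguity (complex conjugation preserves eigenvalues of all the relevant density matrices), this yields the proposition for all five parameters simultaneously, with no case split. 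If you replace your term-by-term concavity step with this convexity lemma (or an equivalent argument that the environment of a mixture contains a which-channel flag that can only help Eve), your proof closes.
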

The number of free parameters has been reduced to $1$ by 
Proposition \ref{proposition-non-imaginary}. 
Thus the problem is rewritten as looking for an estimator of
\begin{eqnarray}
\label{eq-minimization-2}
F(\omega) 
= \min_{R_{\san{yy}} \in \mathcal{P}(\omega)} H_{\rho_{R_{\san{yy}}}}(X|E),
\end{eqnarray}
\textchange{ where $\mathcal{P}(\omega)$ is the set of parameters $R_{\san{yy}}$ such that
the parameters $\omega$ and $R_{\san{yy}}$ constitute a qubit channel
when other parameters are all $0$, and $\rho_{R_\san{yy}}$ is the Choi
matrix corresponding to the parameter $R_\san{yy}$.
Since the range $\mathcal{P}(\omega)$ of the remaining free parameter $R_{\san{yy}}$
is a closed interval and
$H_{\rho}(X|E)$ is a convex function 
(see Lemma \ref{proposition-convexity}),
the minimization in $F(\omega)$ is achieved at the boundary 
point of the range of $R_{\san{yy}}$ or at the zero point 
of the derivative of $H_{\rho}(X|E)$ with respect to $R_{\san{yy}}$. }

An ad-hoc approach to find an estimator is as follows.
\begin{enumerate}
\renewcommand{\theenumi}{\roman{enumi}}
\renewcommand{\labelenumi}{(\theenumi)}
\item 
\textchange{ By using the statistics of sampled bits and the relation
in Eq.~(\ref{eq-relation-between-bias-parameter}),
Alice and Bob calculate the estimate
$\tilde{\omega}$ for the parameters $\omega$. }

\item
If $\mathcal{P}(\tilde{\omega})$ is the empty set, then
Alice and Bob find the point $\hat{\omega}$
such that $\hat{\omega}$ is closest (in Euclidean distance)
to $\tilde{\omega}$ and 
$\mathcal{P}(\hat{\omega})$ is not an empty set\footnote{This
step can be implemented, for example, by the convex optimization
\cite{boyd-book:04} because the set of all $\hat{\omega}$s
such that ${\cal P}(\hat{\omega})$ is not empty is a 
closed convex set. For more detail, see Appendix \ref{appendix-convex-opt}.}.

\item 
Alice and Bob calculate an estimator $F(\hat{\omega})$
for Eve's (worst-case) ambiguity $F(\omega)$.
\end{enumerate}
The validity of this estimation procedure can be
shown as follows. 
The estimator $\tilde{\omega}$ converges
to the true value $\omega$ in probability.
The estimator $\hat{\omega}$ also converges to $\omega$,
because $\| \hat{\omega} - \tilde{\omega} \| \le \| \tilde{\omega} -
\omega \|$, which implies 
$\|\hat{\omega} - \omega \| \le 2 \|\tilde{\omega} - \omega \|$
by the triangle inequality.
Thus the following lemma, which is proved in Appendix \ref{appendix-b},
guarantees that 
the estimator $F(\hat{\omega})$ converges to the desired quantity
$F(\omega)$ in probability
as the number of sampled bits goes to the infinity.
\begin{lemma}
\label{lemma-continuity}
The function $F(\omega)$ is a continuous function
of $\omega$.
\end{lemma}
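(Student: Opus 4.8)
The plan is to write $F$ as the value of a parametric minimisation, $F(\omega)=\min_{R_{\san{yy}}\in\mathcal{P}(\omega)}\phi(\omega,R_{\san{yy}})$ with $\phi(\omega,R_{\san{yy}}):=H_{\rho_{R_{\san{yy}}}}(X|E)$, and to establish (a) that $\phi$ is jointly continuous in $(\omega,R_{\san{yy}})$ and (b) that the interval $\mathcal{P}(\omega)=[a(\omega),b(\omega)]$ has endpoints depending continuously on $\omega$. Continuity of $F$ then follows from the elementary fact that the minimum of a jointly continuous function over an interval with continuously varying endpoints is continuous (equivalently, from Berge's maximum theorem applied to the correspondence $\omega\mapsto\mathcal{P}(\omega)$).

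For (a): using the decomposition $H_{\rho}(X|E)=H(\rho_{XE})-H(\rho_E)$ of Section~\ref{section-protocol}, the term $H(\rho_{XE})=1+\frac{1}{2}\sum_{x}H(\mathcal{E}_B(\ket{x_\san{z}}\bra{x_\san{z}}))$ is a continuous function of $\omega$ alone, because Proposition~\ref{proposition-non-imaginary} lets us set $R_{\san{yz}}=t_\san{y}=0$, so the output states $\mathcal{E}_B(\ket{x_\san{z}}\bra{x_\san{z}})$, and hence their entropies, depend only on the components of $\omega$; and $H(\rho_E)$ is the von Neumann entropy of the Choi matrix in Eq.~(\ref{eq-choi-matrix}), which is affine in $(\omega,R_{\san{yy}})$, so it is jointly continuous since the von Neumann entropy is continuous on density matrices. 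For (b): $\mathcal{P}(\omega)$ is the set of $R_{\san{yy}}$ making the matrix of Eq.~(\ref{eq-choi-matrix}) (with the five parameters of Proposition~\ref{proposition-non-imaginary} set to $0$) positive semidefinite, which is a closed bounded interval because that matrix is affine in $R_{\san{yy}}$ and the positive semidefinite cone is closed and convex; its endpoints are the extreme real values of $R_{\san{yy}}$ at which the matrix becomes singular, i.e. extreme real roots of the determinant, a degree-four polynomial in $R_{\san{yy}}$ whose coefficients are polynomials in $\omega$. Since the roots of a polynomial depend continuously on its coefficients and $\mathcal{P}(\omega)$ is always a single interval, one expects $a$ and $b$ to be continuous on the domain $D:=\{\omega:\mathcal{P}(\omega)\neq\emptyset\}$, which is itself closed and convex.

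I expect the continuity of $a$ and $b$ at the boundary of $D$ to be the main obstacle: there the interval $\mathcal{P}(\omega)$ can collapse to a single point and the Choi matrix is forced to be rank deficient---for instance at the parameters of the unitary rotation channels considered later, whose Choi matrix is pure---so the relevant root of the determinant may become a multiple root or exchange with a spurious one, and some care is needed to single it out continuously. The structural facts I would lean on to close this gap are that $\mathcal{P}(\omega)$ is an \emph{interval} for every $\omega$, so its two endpoints are unambiguous; that the region of pairs $(\omega,R_{\san{yy}})$ for which $\rho_{R_{\san{yy}}}$ is positive semidefinite is convex; and that this region contains a strictly feasible point, namely the completely depolarising channel with Choi matrix $\frac{1}{4}I$ at $\omega=0$, $R_{\san{yy}}=0$. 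A Slater/perturbation argument built from convex combinations with this strictly feasible point then yields the lower hemicontinuity of $\mathcal{P}$ (the part not already implied by compactness and closedness of its graph), which is exactly what is needed to conclude that $F$ is continuous on $D$.
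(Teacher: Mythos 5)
Your overall architecture is the same as the paper's: reduce the continuity of $F$ to (i) joint continuity of $(\omega,R_{\san{yy}})\mapsto H_{\rho_{R_{\san{yy}}}}(X|E)$, which you establish correctly, and (ii) Hausdorff-continuity of the interval-valued constraint set $\omega\mapsto\mathcal{P}(\omega)$, equivalently continuity of its two endpoint functions. The paper's device for (ii) is one you have in hand but never invoke: since the graph $\mathcal{Q}=\{(\omega,R_{\san{yy}}):\rho(\omega,R_{\san{yy}})\succeq 0\}$ is convex, the upper endpoint $U(\omega)=\max\mathcal{P}(\omega)$ is a \emph{concave} function of $\omega$ and the lower endpoint $L(\omega)=\min\mathcal{P}(\omega)$ is convex, which immediately gives their continuity on the interior of the domain $D$; the paper then treats the boundary of $D$ by a separate limiting argument using closedness of $\mathcal{Q}$.

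The genuine gap sits exactly where you flag it---lower hemicontinuity at the boundary of $D$---and neither of your proposed tools closes it. The root-continuity route fails for the reason you give yourself (multiple roots and root exchange occur precisely when the interval degenerates). The Slater-point route fails for a structural reason: the convex combination $(1-t)(\omega_1,r_1)+t(0,0)$ is indeed an interior point of $\mathcal{Q}$, but its base coordinate $(1-t)\omega_1$ has moved into the \emph{interior} of $D$, so the interior ball it provides certifies that $\mathcal{P}(\omega)$ is nonempty near $r_1$ only for $\omega$ near $(1-t)\omega_1$; it says nothing about the fibers over boundary points of $D$ that approach $\omega_1$. And lower hemicontinuity of the slice correspondence of a compact convex body with nonempty interior can genuinely fail at boundary points of the projection: take the convex hull of (disk $\times\{0\}$) and a single point lying above one boundary point of the disk. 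The fiber over that boundary point is a nondegenerate segment, while the fibers over all neighbouring boundary points of the disk are singletons, so the minimum of a continuous function over the fiber is not upper semicontinuous there, even though the set is convex, compact, has nonempty interior, and contains a strictly feasible point. Hence convexity plus Slater is not a sufficient general principle; to finish you must either exploit the specific structure of the spectrahedron $\mathcal{Q}$ at the boundary of $D$, or follow the paper's route through concavity of $U$ and convexity of $L$ and supply the boundary argument there.
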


\textchange{ Although we showed a procedure to exactly estimate Eve's
worst case ambiguity so far, it is worthwhile to show
a closed form lower bound on Eve's worst case ambiguity,
which will be proved in Appendix \ref{appendix-d}. }
\begin{proposition}
\label{remark-unital}
\textchange{ Let $d_{\san{z}}$ and $d_{\san{x}}$ be the singular values of
the matrix }
\begin{eqnarray}
\label{eq-matrix}
\left[ \begin{array}{cc} 
R_{\san{zz}} & R_{\san{zx}} \\
R_{\san{xz}} & R_{\san{xx}}
\end{array} \right].
\end{eqnarray}
\textchange{ Then, we have} 
\begin{eqnarray}
F(\omega) 
&\ge& 1 - h\left( \frac{1 + d_{\san{z}}}{2} \right) 
- h\left( \frac{1 + d_{\san{x}}}{2} \right) \nonumber \\
&&~~~~+ h\left( \frac{1 + \sqrt{R_{\san{zz}}^2 + R_{\san{xz}}^2}}{2} \right),
\label{eq-unital-bound}
\end{eqnarray}
\textchange{ where $h(\cdot)$ is the binary entropy function. 
The equality holds if $t_{\san{z}} = t_{\san{x}} = 0$. }
\end{proposition}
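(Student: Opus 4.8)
The plan is to reduce the proposition to a single scalar inequality for the von Neumann entropy of the Choi matrix. I start from the identity obtained at the end of Section~\ref{section-protocol},
$$
H_{\rho_{R_{\san{yy}}}}(X|E) = 1 + \tfrac12 H\big(\mathcal{E}_B(\ket{0_{\san{z}}}\bra{0_{\san{z}}})\big) + \tfrac12 H\big(\mathcal{E}_B(\ket{1_{\san{z}}}\bra{1_{\san{z}}})\big) - H(\rho_{AB}) ,
$$
where $\rho_{AB}$ is the Choi matrix of Eq.~(\ref{eq-choi-matrix}). Once Proposition~\ref{proposition-non-imaginary} has set every free parameter other than $R_{\san{yy}}$ equal to $0$, the first three terms are independent of $R_{\san{yy}}$: by Eq.~(\ref{eq-affine-map}) the Bloch vectors of $\mathcal{E}_B(\ket{0_{\san{z}}}\bra{0_{\san{z}}})$ and $\mathcal{E}_B(\ket{1_{\san{z}}}\bra{1_{\san{z}}})$ are $(R_{\san{zz}}+t_{\san{z}},\,R_{\san{xz}}+t_{\san{x}},\,0)$ and $(-R_{\san{zz}}+t_{\san{z}},\,-R_{\san{xz}}+t_{\san{x}},\,0)$, so that, writing $v:=(R_{\san{zz}},R_{\san{xz}})$, $t':=(t_{\san{z}},t_{\san{x}})$ and using $H=h(\tfrac{1+r}{2})$ for a qubit with Bloch vector of length $r$,
$$
F(\omega) = 1 + \tfrac12 h\big(\tfrac{1+|v+t'|}{2}\big) + \tfrac12 h\big(\tfrac{1+|v-t'|}{2}\big) - \max_{R_{\san{yy}}\in\mathcal{P}(\omega)} H(\rho_{AB}) .
$$
Hence the proposition is equivalent to the estimate, valid for every admissible $R_{\san{yy}}$,
$$
H(\rho_{AB}) \le h\big(\tfrac{1+d_{\san{z}}}{2}\big) + h\big(\tfrac{1+d_{\san{x}}}{2}\big) + \Delta , \qquad \Delta := \tfrac12 h\big(\tfrac{1+|v+t'|}{2}\big) + \tfrac12 h\big(\tfrac{1+|v-t'|}{2}\big) - h\big(\tfrac{1+|v|}{2}\big) ,
$$
together with attainability of equality at some $R_{\san{yy}}$ when $t'=0$, in which case $\Delta=0$. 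That $\Delta\le 0$ (with equality iff $t'=0$) is immediate from $|v+t'|+|v-t'|\ge 2|v|$ and the concavity and monotonicity of $r\mapsto h(\tfrac{1+r}{2})$ on $[0,1]$.

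For the unital case $t'=0$ (so $\Delta=0$) I would diagonalize the matrix in Eq.~(\ref{eq-matrix}) by its singular value decomposition $O_L\,\mathrm{diag}(d_{\san{z}},d_{\san{x}})\,O_R$ with $O_L,O_R\in O(2)$. Proper rotations of the $\san{z}$--$\san{x}$ plane of the Bloch sphere are induced by conjugating $\mathcal{E}_B$, before and after, by unitaries of the form $e^{-\mathbf{i}\phi\sigma_{\san{y}}/2}$, and such pre/post-composition leaves $H(\rho_{AB})$ invariant (it conjugates the Choi matrix by $U^{\mathrm{T}}\otimes I$, respectively $I\otimes V$); a reflection among $O_L,O_R$ is absorbed by an additional $\sigma_{\san{z}}$-conjugation, which reflects the $\san{z}$--$\san{x}$ block and replaces $R_{\san{yy}}$ by $-R_{\san{yy}}$, harmless under the maximization over $R_{\san{yy}}$. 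This reduces the problem to a diagonal-block unital channel, whose Choi matrix is Bell-diagonal with eigenvalues $\tfrac14(1\pm R_{\san{zz}}\pm R_{\san{xx}}\pm R_{\san{yy}})$, i.e.\ the Pauli probabilities $p_I,p_{\san{z}},p_{\san{x}},p_{\san{y}}$. Grouping $\{p_I,p_{\san{z}}\}$ and $\{p_{\san{x}},p_{\san{y}}\}$, whose totals are $\tfrac{1\pm R_{\san{zz}}}{2}$, and applying concavity of $h$ to the convex combination producing $p_{\san{y}}+p_{\san{z}}$ yields $H(\rho_{AB})=H(p_I,p_{\san{z}},p_{\san{x}},p_{\san{y}})\le h(\tfrac{1+|R_{\san{zz}}|}{2})+h(\tfrac{1+|R_{\san{xx}}|}{2})=h(\tfrac{1+d_{\san{z}}}{2})+h(\tfrac{1+d_{\san{x}}}{2})$, and the concavity step is an equality exactly when $R_{\san{yy}}=R_{\san{zz}}R_{\san{xx}}$, a value that always satisfies the complete-positivity constraints; this gives both the inequality and the equality assertion in the unital case.

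For $t'\neq 0$, what remains is to show $H(\rho_{AB})|_{t'}\le H(\rho_{AB})|_{t'=0}+\Delta$, i.e.\ that adding the translation $t'$ decreases the Choi entropy by at least $|\Delta|$ (the linear part, hence $d_{\san{z}}$ and $d_{\san{x}}$, being unaffected by the translation). The natural route is to write $H(\rho_{AB})=H(\rho_B)+H(A|B)_{\rho}$ with $H(\rho_B)=h(\tfrac{1+|t'|}{2})$, and to bound $H(A|B)_{\rho}$ using that the $\san{z}$-basis measurement on $A$ maps $\rho_{AB}$ to the cq-state $\sum_{x\in\mathbb{F}_2}\tfrac12\ket{x}\bra{x}\otimes\mathcal{E}_B(\ket{x_{\san{z}}}\bra{x_{\san{z}}})$ and comparing the associated conditional entropies; turning this into the stated bound, and checking that it is tight precisely at $t'=0$, is where the real computation lies.

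I expect this non-unital step to be the main obstacle. The Bell-diagonal description of $\rho_{AB}$ that made the unital case transparent is lost as soon as $t'\neq 0$, so one can no longer read off the eigenvalues of the Choi matrix; and the singular-value reduction is of no help here, because a rotation of the $\san{z}$--$\san{x}$ plane rotates $t'$ together with the linear part and therefore cannot decouple them. A secondary point requiring care is that the reflections appearing in the singular value decomposition are genuinely realizable inside $\mathcal{P}(\omega)$: the sign freedom of the free parameter $R_{\san{yy}}$ should supply exactly the missing degree of freedom, but this must be reconciled with the complete-positivity constraints that define $\mathcal{P}(\omega)$.
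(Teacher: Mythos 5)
Your treatment of the unital case is correct and is essentially the computation the paper performs: singular value decomposition of the $\san{z}$--$\san{x}$ block, Bell-diagonal Choi eigenvalues $(q_\san{i},q_\san{z},q_\san{x},q_\san{y})$, and the grouping/subadditivity bound $H[q_\san{i},q_\san{z},q_\san{x},q_\san{y}]\le h(q_\san{i}+q_\san{z})+h(q_\san{i}+q_\san{x})$ with equality at an admissible $R_{\san{yy}}$. But the non-unital step, which you correctly flag as the main obstacle, is a genuine gap, and it is not a mere technicality: your decomposition $H_\rho(X|E)=1+\frac12\sum_x H(\mathcal{E}_B(\ket{x}\bra{x}))-H(\rho_{AB})$ forces you to prove the \emph{sharpened} Choi-entropy bound $H(\rho_{AB})\le h(\frac{1+d_\san{z}}{2})+h(\frac{1+d_\san{x}}{2})+\Delta$ with $\Delta\le 0$, i.e.\ you must show that turning on the translation $t'$ lowers $H(\rho_{AB})$ by at least as much as it lowers the average output entropy. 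Nothing in your outline delivers that, and the route you sketch (conditioning on Bob's system and dephasing Alice) only gives inequalities in the unhelpful direction.

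The idea you are missing is to bound the conditional entropy $H_\rho(X|E)$ as a whole rather than term by term. The paper defines $\mathcal{E}_B^-(\varrho):=\sigma_\san{y}\,\mathcal{E}_B(\sigma_\san{y}\varrho\sigma_\san{y})\,\sigma_\san{y}$; in the Stokes picture (after Proposition \ref{proposition-non-imaginary} has killed the off-block parameters) this preserves the linear part $R$ and flips $(t_\san{z},t_\san{x})\mapsto(-t_\san{z},-t_\san{x})$, and since it amounts to relabelling the uniformly distributed input together with a unitary on Bob's output, $H_{\rho^-}(X|E)=H_\rho(X|E)$. The midpoint $\mathcal{E}_B'=\frac12\mathcal{E}_B+\frac12\mathcal{E}_B^-$ is unital with the same $R$, so Lemma \ref{proposition-convexity} (convexity of Eve's ambiguity in the channel) gives $H_\rho(X|E)\ge H_{\rho'}(X|E)$ in one stroke, reducing everything to the unital computation you already have. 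This also explains why equality is only asserted \emph{if} $t_\san{z}=t_\san{x}=0$: the symmetrization step is generally lossy. Your triangle-inequality observation that $\Delta\le 0$ is consistent with this but does not substitute for it.
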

\begin{remark}
\label{remark-after-unital}
\textchange{ For the reverse reconciliation, the worst case of Eve's
ambiguity $H_{\rho}(Y|E)$ is lower bounded by the right hand side
of Eq.~(\ref{eq-unital-bound}) in which $R_{\san{xz}}$ is replaced
by $R_{\san{zx}}$. }
\end{remark}
\begin{remark}
\textchange{ The right hand side of Eq.~(\ref{eq-unital-bound}) is further lower bounded by
$1 - h((1-R_\san{xx})/2)$. Since $(1-R_\san{xx})/2$ equals to the 
so-called phase error rate $P_\san{x}$ (see Eq.~(\ref{eq-phase-error})),
the right hand side of Eq.~(\ref{eq-unital-bound}) is a lower bound
on Eve's worst case ambiguity that is tighter than the well known
bound $1 - h(P_\san{x})$ \cite{renner:05}. }
\end{remark}

\begin{remark}
We described estimation methods for Eve's ambiguity $H_\rho(X|E)$
based on the channel estimation method so-called linear inversion
\cite{hradil:04} in Section \ref{six-state} and in this section.
It is well-known that the maximum likelihood (ML) 
channel estimator
has smaller estimation error than the linear inversion \cite{hradil:04}.
An algorithm for ML channel estimation has been proposed
\cite{fiurasek:01, jezek:03, hradil:04}, 
however, its convergence as a numerical algorithm
has not been proved. The absence of a convergence proof prevents
us from using that algorithm in the QKD protocols
that require a rigorous proof of the convergence of an estimator.

The computation of the ML channel estimate in the six-state
protocol is a convex optimization
problem.
Because the set of Choi matrices is a closed convex set defined by equality
constraints and generalized inequality constraints \cite{boyd-book:04}
and the log-likelihood function is a concave function of Choi matrices
for given measurement outcomes.
Therefore,
the interior point method \cite{boyd-book:04}, for example,
 can compute the ML estimate with
convergence guarantee.
For the BB84 protocol, the domain of log-likelihood function
is narrowed to real Choi matrices by Proposition \ref{proposition-non-imaginary} that
 is also a closed convex set, and the parameter $R_{\san{yy}}$ remains undetermined
as well as the linear inversion
because the log-likelihood function is independent of $R_{\san{yy}}$.
The rest of parameters can be computed by a convex optimization algorithm.
If we are allowed to use enough computation time for sophisticated 
channel estimation procedures,
then it may be better to use the ML channel estimation. 
\end{remark}

\subsection{Relation to the conventional estimation procedure}
\label{subsection-relation-to-partial-twirled-channel}

In this section, we show the relation between Eve's ambiguity
$H_{\rho}(X|E)$ that is estimated by our proposed procedures
and that estimated by the conventional procedures.

\textchange{ In the conventional procedure to estimate $H_{\rho}(X|E)$ in the 
six-state protocol \cite{renner:05}, we first estimate the so called the 
error rate for each basis: }
\begin{eqnarray}
P_\san{z} &:=& \frac{ \bra{1_{\san{z}}} \mathcal{E}_B(\ket{0_{\san{z}}}\bra{0_{\san{z}}})
\ket{1_{\san{z}}} + \bra{0_{\san{z}}} \mathcal{E}_B(\ket{1_{\san{z}}}\bra{1_{\san{z}}})
\ket{0_{\san{z}}} }{2}, \nonumber \\
P_\san{x} &:=& \frac{ \bra{1_{\san{x}}} \mathcal{E}_B(\ket{0_{\san{x}}}\bra{0_{\san{x}}})
\ket{1_{\san{x}}} + \bra{0_{\san{x}}} \mathcal{E}_B(\ket{1_{\san{x}}}\bra{1_{\san{x}}})
\ket{0_{\san{x}}} }{2}, \label{eq-phase-error} \\
P_\san{y} &:=& \frac{ \bra{1_{\san{y}}} \mathcal{E}_B(\ket{0_{\san{y}}}\bra{0_{\san{y}}})
\ket{1_{\san{y}}} + \bra{0_{\san{y}}} \mathcal{E}_B(\ket{1_{\san{y}}}\bra{1_{\san{y}}})
\ket{0_{\san{y}}} }{2}. \nonumber 
\end{eqnarray}
\textchange{ Then, we calculate the worst case of Eve's ambiguity
$\min H_{\rho}(X|E)$ in which the minimization is taken
over the set of all channels that are compatible with 
the estimates of the error rates
$(P_\san{z}, P_\san{x}, P_\san{y})$. Since we estimate the actual channel
instead of the worst case, Eve's ambiguity estimated by our procedure
is at least as large as that estimated by the conventional one. }

\textchange{ In the conventional procedure to estimate $H_\rho(X|E)$ in
the BB84 protocol, we first estimate $P_{\san{z}}$
and $P_\san{x}$. Then we calculate
the worst case of Eve's ambiguity
$\min H_{\rho}(X|E)$ in which the minimization is taken
over the set of all channels that are compatible with 
the estimates of the error rates
$(P_\san{z}, P_\san{x})$. The minimum is given by the
well known value $1 - h(P_\san{x})$
\cite{renner:05}. Since the error rates $(P_\san{z}, P_\san{x})$ are 
degraded version of the parameters $\omega$, the range of
minimization in the conventional procedure is larger than
$\mathcal{P}(\omega)$ in our proposed procedure.
Thus, Eve's worst case ambiguity estimated by our proposed 
procedure is at least as large as that estimated by the
conventional one. }

\textchange{ For both the six-state protocol and the BB84 protocol,
a sufficient condition such that Eve's worst case ambiguity
estimated by our proposed procedure and that estimated by
the conventional one coincide is
that the channel $\mathcal{E}_B$ is a Pauli channel.
However, it is not clear 
whether the condition 
is also a necessary condition or not. }

Combining the arguments in this section and 
Remark \ref{remark-asymmetric}, we find that our
proposed classical processing yields at least as high key rate as
the standard processing by Shor and Preskill \cite{shor:00}
for the QKD protocols.

\section{Examples}
\label{section-example}

In this section, we calculate the key rates of
the BB84 protocol and the six-state protocol with
our proposed classical processing for the amplitude damping 
channel, \textchange{the unital channel, and the rotation channel}, 
and clarify that the key rate of our proposed classical processing is higher 
than previously known ones.

\subsection{Amplitude damping channel}
\label{example1}

\textchange{ In the Stokes parameterization, 
the amplitude damping 
channel $\mathcal{E}_p$ is given by the afine map }
\begin{eqnarray}
\left[ \begin{array}{c}
\theta_{\mathsf{z}} \\ \theta_{\mathsf{x}} \\ \theta_{\mathsf{y}}
\end{array} \right]
\mapsto
\left[ \begin{array}{ccc}
1 - p & 0 & 0 \\
0 & \sqrt{1-p} & 0 \\
0 & 0 & \sqrt{1-p}
\end{array}
\right]
\left[ \begin{array}{c}
\theta_{\mathsf{z}} \\ \theta_{\mathsf{x}} \\ \theta_{\mathsf{y}}
\end{array} \right]
+  
\left[ \begin{array}{c}
p \\ 0 \\ 0
\end{array} \right]
\label{eq-amplitude-damping}
\end{eqnarray}
\textchange{ parameterized by a real parameter $0 \le p \le 1$.}

\textchange{ We first calculate the key rate for the BB84 protocol.
In the BB84 protocol, we can estimate
the parameters $R_{\mathsf{zz}} = 1- p$, $R_{\mathsf{zx}} = 0$, 
$R_{\mathsf{xz}} = 0$, $R_{\mathsf{xx}} = \sqrt{1-p}$, 
$t_{\mathsf{z}} = p$, $t_{\mathsf{x}} = 0$. 
By the proposition \ref{proposition-non-imaginary}, we
can set $R_{\mathsf{zy}} = R_{\mathsf{xy}} = R_{\mathsf{yz}} = 
R_{\mathsf{yx}} = t_{\mathsf{y}} = 0$.
Furthermore, by the condition on the TPCP map \cite{fujiwara:99} }
\begin{eqnarray*}
(R_{\mathsf{xx}} - R_{\mathsf{yy}})^2 
\le (1 - R_{\mathsf{zz}})^2 - t_{\mathsf{z}}^2,
\end{eqnarray*}
\textchange{ we can decide the remaining parameter as $R_\san{yy} = \sqrt{1-p}$.
Thus, Eve's (worst-case) ambiguity $F(\omega)$ for the BB84 protocol coincide
with the true value $H_{\rho}(X|E)$, which means that the BB84 protocol can
achieve the same key rate as the six-state protocol. }

By straightforward calculations, the key rates of the 
direct reconciliation and reverse reconciliation are calculated as
\begin{eqnarray*}
1 + \frac{1}{2} h(p) - h\left( \frac{p}{2} \right)
- \frac{1+p}{2} h\left( \frac{1}{1+p} \right), 
\label{eq-key-rate-forward} 
\end{eqnarray*}
and
\begin{eqnarray*}
h\left( \frac{1+p}{2}\right) + 
\frac{1+p}{2}h\left(\frac{1}{1+p}\right) - h\left(\frac{1}{2}\right)
- \frac{1}{2}h(p)
\label{eq-key-rate-reverse}
\end{eqnarray*}
respectively. These key rates are
plotted in Fig.~\ref{Fig-comparison}.

\textchange{ The Bell diagonal entries of the Choi matrix
$(\rom{id} \otimes \mathcal{E}_p)(\psi)$ are
$\frac{1}{4}(2+2 \sqrt{1-p} - p)$,
$\frac{1}{4}p$, $\frac{1}{4}(2-2 \sqrt{1-p} - p)$,
and $\frac{1}{4}p$.
The key rate of the six-state protocol and the BB84
protocol with the conventional processing can be
calculated only from the Bell diagonal entries, and 
are also plotted in Fig.~\ref{Fig-comparison}. }

We find that the key rates of 
our proposed classical processing
are higher than those of the conventional processing. 
Furthermore, we find that the key rate of the reverse
reconciliation is higher than that of the direct
reconciliation.

\begin{figure}
\centering
\includegraphics[width=\linewidth]{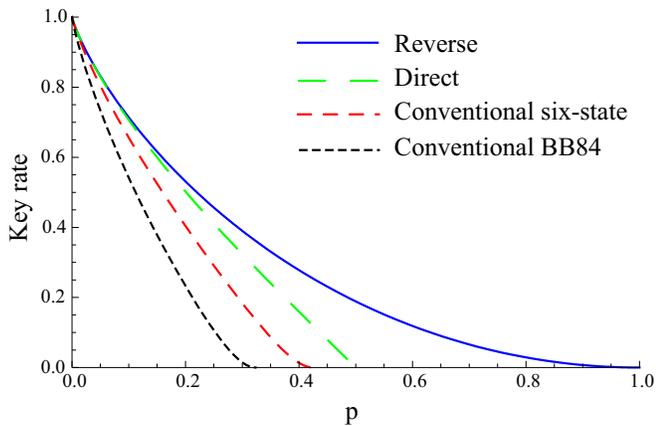}
\caption{ (Color online)
Comparison of the key rates against the parameter
$p$ of the amplitude damping channel (see Eq.~(\ref{eq-amplitude-damping})).
``Reverse'' and ``Direct'' are the key rates when we use the
reverse reconciliation and the direct reconciliation
in our proposed classical processing respectively.
``Conventional six-state'' and ``Conventional BB84'' are the key
rates of the six-state protocol and the BB84 protocol
with the conventional classical processing.
Note that the conventional classical processing 
involves the noisy preprocessing \cite{renner:05, kraus:05}.
}
\label{Fig-comparison}
\end{figure}

\begin{remark}
\label{remark-degradable}
When the channel is degradable \cite{devetak:05b}, i.e.,
there exists a channel $\mathcal{D}$ such that
$\mathcal{E}_E(\rho) = \mathcal{D} \circ \mathcal{E}_B(\rho)$
for any input $\rho$, 
the quantum wiretap channel capacity \cite{devetak:05} 
is known to be achievable without any auxiliary random variable
\cite{smith:07}. 

For the one-way key agreement from 
a degradable (from Alice to Bob and Eve) $\{ccq\}$-state, which is a state
$\rho_{XYE} = \sum_{x,y} P_{XY}(x,y) \ket{x}\bra{x} \otimes
 \ket{y}\bra{y} \otimes \rho_E^{x,y}$ such that there exist
states $\{ \hat{\rho}_E^{y} \}_y$ satisfying 
$\sum_{y} P_{Y|X}(y|x) \hat{\rho}_E^y = \rho_E^x := \sum_{y}
 P_{Y|X}(y|x) \rho_E^{x,y}$,
a similar statement also holds, namely
the key rate in Eq.~(\ref{eq-secure-key-rate}) cannot
be improved with any auxiliary random variable.
The use of auxiliary random variable for the key agreement
corresponds to the noisy preprocessing \cite{renner:05, kraus:05}.

The above statement is proved as follows.
Since we are considering the information reconciliation and 
the privacy amplification with one-way classical communication,
key rates only depend on distribution $P_{XY}$ and
$\{cq\}$-state $\rho_{XE}$. Thus the maximum key rate for
$\rho_{XYE}$ is equals to  that for degraded version of it,
$\hat{\rho}_{XYE} := \sum_{x,y} P_{XY}(x,y) \ket{x}\bra{x} \otimes
 \ket{y}\bra{y} \otimes \hat{\rho}_E^{y}$.
On the other hand the (quantum) intrinsic information
\begin{eqnarray*}
I_\rho(X;Y \downarrow E) := \inf I_\rho(X;Y | E^\prime)
\end{eqnarray*}
is an upper bound on the maximum key rate \cite{christandl:06}, where 
$I_\rho(X;Y | E^\prime) := H_\rho(XE) + H_\rho(YE) - H_\rho(XYE) - H_\rho(E)$ 
is the quantum conditional mutual information,
and the infimum is taken over all $\{ccq\}$-states
$\rho_{XYE^\prime} = (\rom{id} \otimes \mathcal{N}_{E \to
 E^\prime})(\rho_{XYE})$ for CPTP maps $\mathcal{N}_{E \to E^\prime}$
from system $E$ to $E^\prime$.
Taking the identity map $\rom{id}_E$, the quantum conditional mutual
information $I_\rho(X;Y|E)$ itself is an upper bound on
the maximum key rate.
Applying this fact for the degraded $\{ccq\}$-state, $\hat{\rho}_{XYE}$,
the maximum key rate is upper bounded by
\begin{eqnarray*}
\lefteqn{
I_{\hat{\rho}}(X;Y | E) 
} \\
&=& I_{\hat{\rho}}(X ; YE) - I_{\hat{\rho}}(X;E) \\
&=& H_{\hat{\rho}}(X|E) - H(X|Y) + I_{\hat{\rho}}(X;E | Y) \\
&=& H_{\rho}(X|E) - H(X|Y),
\end{eqnarray*}
which is the desired upper bound, and is equal
to Eq.~(\ref{eq-secure-key-rate}).

When Alice randomly sends $\{ \ket{0_\san{z}}, \ket{1_\san{z}} \}$ 
over the amplitude damping channel and Bob measures
the received state by $\sigma_\san{z}$ observable, 
the resulting $\{ccq\}$-state is 
degradable\footnote{The fact that amplitude damping channel is degradable
has been shown in \cite{giovannetti:05}.}, 
which implies
the key rate of direct reconciliation cannot be improved
by the noisy preprocessing.
It is not clear whether the $\{ccq\}$-state for the amplitude
damping channel is degradable in reverse order;  
there exists a possibility to improve the key rate of
reverse reconciliation by the noisy preprocessing.
\end{remark}

\subsection{Unital channel and rotation channel}
\label{example2}

\textchange{A channel $\mathcal{E}_B$ is called a unital channel if
the vector $(t_\san{z}, t_\san{x}, t_\san{y})$ is the zero
vector in the Stokes parameterization (see Eq.~(\ref{eq-affine-map})), 
or equivalently if the channel $\mathcal{E}_B$ maps
the completely mixed state $I/2$ to itself.
The unital channel has the following physical meaning
in QKD protocols. When Eve conducts the Pauli cloning \cite{cerf:00}
with respect to an orthonormal basis that is a rotated
version of $\{\ket{0_\san{z}}$, $\ket{1_\san{z}}\}$, the quantum channel
from Alice to Bob is not a Pauli channel but a unital
channel. It is natural to assume that Eve cannot determine
the direction of the basis $\{\ket{0_\san{z}}$, $\ket{1_\san{z}}\}$
accurately, and the unital channel deserve consideration
in the QKD research as well as the Pauli channel. 

By the singular value decomposition, we can 
decompose the matrix
$R$ in Eq.~(\ref{eq-affine-map}) as }
\begin{eqnarray}
\label{eq-svd}
O_2 \left[\begin{array}{ccc}
e_\san{z} & 0 & 0 \\
0 & e_\san{x} & 0 \\
0 & 0 & e_\san{y} 
\end{array} \right]
O_1,
\end{eqnarray}
\textchange{ where $O_1$ and $O_2$ are some rotation 
matrices\footnote{The rotation matrix is the real orthogonal
matrix with determinant $1$.}, and
$|e_\san{z}|$, $|e_\san{x}|$, and $|e_\san{y}|$
are the singular value of the matrix 
$R$\footnote{The decomposition is not unique 
because we can change the order of 
$(e_\san{z}, e_\san{x}, e_\san{y})$ or
the sign of them by adjusting the rotation matrices $O_1$ and $O_2$.
However, the result in this paper does not depends on a choice of the decomposition.}.  
Thus, we can consider the unital channel $\mathcal{E}_B$ as the composition of the
unitary channel $\mathcal{E}_{O_1}$, the Pauli channel }
\begin{eqnarray*}
\varrho \mapsto q_\san{i} \varrho + q_\san{z} \sigma_\san{z} \varrho \sigma_\san{z}
+ q_\san{x} \sigma_\san{x} \varrho \sigma_\san{x} + q_\san{y} \sigma_\san{y} \varrho \sigma_\san{y},
\end{eqnarray*}
\textchange{ and the unitary channel $\mathcal{E}_{O_2}$, where} 
\begin{eqnarray*}
q_\san{i} &=& \frac{1 + e_\san{z} + e_\san{x} + e_\san{y}}{4}, \\
q_\san{z} &=& \frac{1 + e_\san{z} - e_\san{x} - e_\san{y}}{4}, \\
q_\san{x} &=& \frac{1 - e_\san{z} + e_\san{x} - e_\san{y}}{4}, \\
q_\san{y} &=& \frac{1 - e_\san{z} - e_\san{x} + e_\san{y}}{4}
\label{eq-relation-q-d}
\end{eqnarray*}
\cite{bourdon:04}.

\textchange{
For the unital channel, we have 
$H(X|Y) = H(Y|X) = h((1+ R_\san{zz})/2)$.
For the six-state protocol, we can calculate Eve's 
ambiguity $H_\rho(X|E)$ as }
\begin{eqnarray}
1 - H[q_\san{i}, q_\san{z}, q_\san{x}, q_\san{y}] + 
h\left( \frac{1 + \sqrt{R_\san{zz}^2 + R_\san{xz}^2 + R_\san{yz}^2}}{2} \right)
\label{eq-unital-six-state}
\end{eqnarray}
\textchange{
because $(q_\san{i}, q_\san{z}, q_\san{x}, q_\san{y})$ are the eigenvalues
of the Choi matrix $\rho_{AB}$.
For the reverse reconciliation, Eve's ambiguity $H_\rho(Y|E)$ is given by
Eq.~(\ref{eq-unital-six-state}) in which $R_\san{xz}$ and $R_\san{yz}$
are replaced by $R_\san{zx}$ and $R_\san{zy}$ respectively.
Thus, $R_\san{xz}^2 + R_\san{yz}^2 = R_\san{zx}^2 + R_\san{zy}^2$
is the necessary and sufficient condition for $H_\rho(X|E) = H_\rho(Y|E)$.
For the BB84 protocol, we can calculate Eve's worst case ambiguity
$F(\omega)$ by Proposition \ref{remark-unital} because $t_\san{z} = t_\san{x} = 0$
for the unital channel. Note that the singular values $(d_\san{z}, d_\san{x})$
in Proposition \ref{remark-unital} are different from the singular values
$(|e_\san{z}|, |e_\san{x}|)$ in general because there exist
off-diagonal elements $(R_\san{zy}, R_\san{xy}, R_\san{yz}, R_\san{yx})$.
From Remark \ref{remark-after-unital}, 
$R_\san{xz}^2 = R_\san{zx}^2$ is the necessary and sufficient condition for that
Eve's worst case ambiguity for the direct reconciliation and that for the
reverse reconciliation coincide. }

\textchange{ In the rest of this section, we analyze a special class
of the unital channel, the rotation channel.
We define the rotation channel from Alice to Bob as }
\begin{eqnarray*}
\left[ \begin{array}{c}
\theta_{\mathsf{z}} \\ \theta_{\mathsf{x}} \\ \theta_{\mathsf{y}}
\end{array} \right]
\mapsto
\left[ \begin{array}{ccc}
\cos \vartheta & - \sin \vartheta & 0 \\
\sin \vartheta & \cos \vartheta & 0 \\
0 & 0 & 1
\end{array}
\right]
\left[ \begin{array}{c}
\theta_{\mathsf{z}} \\ \theta_{\mathsf{x}} \\ \theta_{\mathsf{y}}
\end{array} \right].
\end{eqnarray*}
\textchange{ The rotation channels occur, for example, when
the directions of 
the transmitter and the receiver
are not properly aligned. }

\textchange{For the rotation channel, Eq.~(\ref{eq-unital-bound})
gives $F(\omega) = 1$, which implies that Eve gained no information.
Thus, Eve's (worst-case) ambiguity for the BB84 protocol
coincide with the true value $H_{\rho}(X|E)$, and
the BB84 protocol with our proposed 
classical processing can achieve the same key rate
as the six-state protocol. }

There are two  reasons why we show 
this example---the rotation channel.
The first one is that we can obtain secret keys,
in the BB84 protocol, both from
matched measurement outcomes, which are transmitted and received 
by the same basis (say $\san{z}$-basis), and mismatched measurement outcomes,
which are transmitted and received by different bases
(say $\san{z}$-basis and $\san{x}$-basis respectively).
The probability distributions of
Alice and Bob's bit for each case are given by
$P_{X|Y}(1|0) = P_{X|Y}(0|1) = \sin^2 (\vartheta/2)$ and
$P_{X|Y^\prime}(1|0) = P_{X|Y^\prime}(0|1) = \sin^2 (\vartheta/2 -
\pi/4)$ respectively
(see Eqs.~(\ref{eq-distribution-xy}) and (\ref{eq-distribution-xy-2})
for the definitions of $P_{XY}$ and $P_{X Y^\prime}$).
If the channel is biased, i.e., $\vartheta \neq 0, \pi/2, \pi, 3 \pi /2$,
then we can obtain secret keys with positive key rates both
from matched measurement outcomes and mismatched measurement outcomes.
This fact solves an open problem discussed in \cite[Section 5]{matsumoto:07}.

\textchange{ The second reason is that we can obtain a secret key
from matched measurement outcomes even though the so called error rate is
higher than the $25$\% limit \cite{gottesman:03} in the BB84 protocol.
The Bell diagonal entries of
the Choi matrix $\rho_\vartheta$ 
are $\cos^2 (\vartheta/2)$, $0$, $0$,
and $\sin^2(\vartheta/2)$. Thus the error rate is $\sin^2(\vartheta/2)$.
For $\pi /3 \le \vartheta \le 5 \pi/3$,
the error rate is higher than $25$\%, but
we can obtain the positive key rate, $1 - h(\sin^2 (\vartheta/2))$
except $\vartheta = \pi/2, 3 \pi /2$.
Note that the key rate of the standard processing by
Shor and Preskill \cite{shor:00} is 
$1 - 2 h(\sin^2 (\vartheta/2))$.
This fact verifies Curty et al's suggestion \cite{curty:04} that
key agreement might be possible even for the error rates
higher than $25$\% limits.}

\begin{remark}
\label{remark-degraded}
If the $\{ccq\}$-state $\rho_{XYE}$ is degraded (from Alice to Bob and Eve), i.e.,
the $\{ccq\}$-state is of the form 
$\rho_{XYE} = \sum_{x,y} P_{XY}(x,y) \ket{x}\bra{x} \otimes
 \ket{y}\bra{y} \otimes \rho_E^y$, then 
we can prove that the key rate in Eq.~(\ref{eq-secure-key-rate}) cannot
be improved even if we use any noisy preprocessing or
two-way processing. 
The reason is that
the upper bound $I_\rho(X;Y|E)$ and the lower bound in 
Eq.~(\ref{eq-secure-key-rate}) coincide for the degraded $\{ccq\}$-state
in a similar manner to Remark \ref{remark-degradable}.

For the rotation channel $\mathcal{E}_\vartheta$, the resulting
$\{ccq\}$-state is obviously degraded. Thus the key rate
$1 - h(\sin^2 (\vartheta/2))$ cannot be improved any more. 
\end{remark}

\section{Conclusion}
\label{section-conclusion}

In this paper, we constructed a practically implementable 
classical processing for the BB84 protocol and the six-state protocol
that fully utilizes
the accurate channel estimation method.
A consequence of our result is that we should
not discard mismatched measurement outcomes in the QKD
protocols; those measurement outcomes can be used
to estimate the channel accurately, and increase
key rates. 

\textchange{ There is a problem that was not  treated in
this paper. Although we only treated asymptotically secure key 
rate in this paper, the final goal is the non-asymptotic analysis of 
eavesdropper's information, i.e.,
evaluate eavesdropper's information as a function of 
the length of the raw key, the key rate, and the length of sample bits
as in literatures \cite{mayers:01, biham:06, watanabe:06b-preprint,
renner:05b, hayashi:06, meyer:06, scarani:07, scarani:08}.
This topic is a future research agenda. }
 
\section*{Acknowledgment}

\textchange{ We would like to thank Dr.~Jean-Christian Boileau,
Professor Akio Fujiwara,
Dr.~Manabu Hagiwara, Dr.~Kentaro Imafuku, Professor Hideki Imai,
Professor Hoi-Kwong Lo and members of his group,
Professor Masahito Hayashi,
Dr.~Takayuki Miyadera,
Professor Hiroshi Nagaoka,
Professor Renato Renner and members of his group,
Mr.~Yutaka Shikano,
Professor Tadashi Wadayama,
Professor Stefan Wolf and members of his group,
and Professor Isao Yamada
for valuable discussions and comments. }

\textchange{We also would like to appreciate the first 
reviewer for letting us know the reference \cite{liang:03}
and the second reviewer for pointing out an error in 
Remark \ref{remark-asymmetric}
of the earlier version of the manuscript.}

This research was partly supported by the Japan
Society for the Promotion of Science under
Grants-in-Aid No.~18760266 and
No.~00197137.

\appendix


\section{Convexity of Eve's ambiguity}
\label{appendix-a-}

In this appendix, we show a lemma that will be used in the rest
of the appendices.

\begin{lemma}
\label{proposition-convexity}
For two channels $\mathcal{E}_B^1$ and 
$\mathcal{E}_B^2$, and a probabilistically mixed channel 
$\mathcal{E}_B^\prime := \lambda \mathcal{E}_B^1 + (1-\lambda) \mathcal{E}_B^2$, 
Eve's ambiguity is convex, i.e., we have
\begin{eqnarray*}
H_{\rho^\prime}(X|E) \le \lambda H_{\rho^1}(X|E)
 + (1-\lambda) H_{\rho^2}(X|E),
\end{eqnarray*}
where $\rho^\prime_{X E} := \sum_{x \in \mathbb{F}_2} \frac{1}{2}
 \ket{x}\bra{x} \otimes \mathcal{E}_E^\prime(
 \ket{x}\bra{x})$ for channel
$\mathcal{E}_E^\prime$ to all the environment of $\mathcal{E}_B^\prime$,
and $\rho_{XE}^r := \sum_{x \in \mathbb{F}_2} \frac{1}{2}
 \ket{x}\bra{x} \otimes \mathcal{E}_E^r(
 \ket{x}\bra{x})$ for channel $\mathcal{E}_E^r$ to
all the environment of $\mathcal{E}_B^r$ and for $r \in \{1,2\}$.
\end{lemma}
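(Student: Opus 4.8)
The plan is to reduce the claimed convexity of $H_\rho(X|E)$ as a function of the channel to the concavity of the conditional von Neumann entropy $H(A|B)$ as a function of the bipartite state, a standard fact. First I would observe that the state $\rho_{XE}^r$ is obtained from a fixed \emph{pure} bipartite state by sending one half through $\mathcal{E}_B^r$: concretely, let $\ket{\phi}_{XX'} := \sum_{x} \tfrac{1}{\sqrt 2}\ket{x}_X\ket{x}_{X'}$, let $U_B^r$ be a Stinespring isometry of $\mathcal{E}_B^r$ into $BE$, and set $\sigma_{XBE}^r := (\rom{id}_X \otimes U_B^r)\phi_{XX'}(\rom{id}_X \otimes U_B^r)^\dagger$. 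Then $\mathrm{Tr}_B[\sigma_{XBE}^r] = \rho_{XE}^r$ with $H_{\rho^r}(X|E) = H(\sigma_{XE}^r)$'s conditional entropy; and because $\sigma_{XBE}^r$ is pure, $H(X|E)_{\sigma^r} = -H(X|B)_{\sigma^r}$, i.e.\ Eve's ambiguity equals minus Bob's conditional entropy on the $XB$ marginal, which depends on the channel only through $\rho_{XB}^r := (\rom{id}\otimes\mathcal{E}_B^r)(\tfrac12\sum_x\ket{x}\bra{x}\otimes\ket{x}\bra{x})$ — an \emph{affine} (in fact linear) function of $\mathcal{E}_B^r$.

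The key steps, in order, are: (i) rewrite $H_{\rho}(X|E) = -H(X|B)_{\rho_{XB}}$ using purity of the Stinespring dilation, as above; (ii) note that $\rho_{XB}^\prime = \lambda\,\rho_{XB}^1 + (1-\lambda)\,\rho_{XB}^2$ since the map $\mathcal{E}_B \mapsto \rho_{XB}$ is linear; (iii) invoke concavity of the conditional von Neumann entropy, $H(X|B)_{\lambda\rho^1+(1-\lambda)\rho^2} \ge \lambda H(X|B)_{\rho^1} + (1-\lambda) H(X|B)_{\rho^2}$ (equivalently, strong subadditivity / concavity of $-H(\rho_{XB}) + H(\rho_B)$); (iv) multiply by $-1$ to flip the inequality and conclude $H_{\rho^\prime}(X|E) \le \lambda H_{\rho^1}(X|E) + (1-\lambda) H_{\rho^2}(X|E)$. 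An alternative route that avoids purification entirely: introduce a classical flag $\Lambda$ with $P(\Lambda=1)=\lambda$ held by Eve, so that the global state with Eve's system enlarged to $E\Lambda$ has $H(X|E\Lambda) = \lambda H_{\rho^1}(X|E) + (1-\lambda)H_{\rho^2}(X|E)$; then $H_{\rho^\prime}(X|E) = H(X|E) \le H(X|E\Lambda)$ by the fact that discarding a subsystem cannot decrease the conditional entropy (a direct consequence of strong subadditivity). I expect I would present this second argument, as it is shortest and most transparent.

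The main obstacle is essentially bookkeeping rather than mathematics: one must be careful that the ``probabilistically mixed channel'' $\mathcal{E}_B^\prime = \lambda\mathcal{E}_B^1 + (1-\lambda)\mathcal{E}_B^2$ genuinely has a dilation in which Eve holds both the environment \emph{and} the classical label of which channel fired — i.e.\ that $\mathcal{E}_E^\prime$ can legitimately be taken to be ``$\mathcal{E}_E^1$ with probability $\lambda$, else $\mathcal{E}_E^2$, with the choice recorded.'' Since Eve is assumed to control the entire environment, this is exactly the worst-case (and hence the relevant) purification, so it is justified; but it should be stated explicitly because the inequality can fail if Eve is \emph{not} told the label. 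Once this modelling point is pinned down, the remaining content is just monotonicity of conditional entropy under discarding a subsystem, which is immediate from strong subadditivity and needs no further calculation.
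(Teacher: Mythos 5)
Both of your routes contain a direction or identification error at the crucial step, and the ``modelling point'' you flag at the end is resolved the wrong way around. In Route~1, $\rom{Tr}_B[\sigma^r_{XBE}]$ is \emph{not} the lemma's $\rho^r_{XE}$: the lemma's $X$ is a classical register (the $XE$ state is block-diagonal in $x$), whereas your $\sigma^r_{XE}$ retains the coherences $\rom{Tr}_B[U^r\ket{x}\bra{x'}(U^r)^\dagger]$, which are nonzero for, e.g., the amplitude damping channel. Since dephasing $X$ changes $H(X|E)$ in general (by concavity it can only increase it), the purity-based duality $H(X|E)=-H(X|B)$ you invoke applies to the wrong state; to rescue this route you would have to purify the cq-state with an extra copy register $X'$ and run the concavity argument on $H(X|X'B)$ instead. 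In Route~2 the cited principle and the written inequality contradict each other: discarding $\Lambda$ from the \emph{conditioning} system gives $H(X|E)\ge H(X|E\Lambda)$, the reverse of what you need; and if $H_{\rho'}(X|E)$ really were the conditional entropy of $\lambda\rho^1_{XE}+(1-\lambda)\rho^2_{XE}$ with no flag present at all, the lemma would in general fail in the claimed direction.

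The missing idea is that ``all the environment of $\mathcal{E}'_B$'' contains the which-channel register \emph{coherently}: a dilation of the mixture is $\ket{\psi'_{ABER}}=\sqrt{\lambda}\,\ket{\psi^1_{ABE}}\ket{1}+\sqrt{1-\lambda}\,\ket{\psi^2_{ABE}}\ket{2}$, and Eve holds $ER$. All Stinespring dilations are isometrically equivalent on the environment, so there is no ``worst-case purification'' to select; your classical-label version is not a dilation but a \emph{measured}, i.e.\ degraded, environment, which is the better case for Alice, not the worst. The inequality therefore comes not from discarding $\Lambda$ but from \emph{measuring} $R$ in the $\{\ket{1},\ket{2}\}$ basis: data processing gives $I_{\rho'}(X;ER)\ge I_{\rho^*}(X;ER)$, hence $H_{\rho'}(X|ER)\le H_{\rho^*}(X|ER)=\lambda H_{\rho^1}(X|E)+(1-\lambda)H_{\rho^2}(X|E)$, which is exactly the paper's argument. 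Your proposal assembles the right ingredients, but as written neither route closes this step correctly.
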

\begin{proof}
For $r = 1$ and $2$, let $\psi_{ABE}^r$ be a purification of the 
Choi matrix $\rho_{AB}^r := (\rom{id} \otimes \mathcal{E}_B^r)(\psi)$.
Then the density matrix $\rho_{XE}^r$ is derived by
Alice's measurement by $\san{z}$-basis and the partial trace over Bob's system,
 i.e.,
\begin{eqnarray}
\label{eq-cq-state-1}
\rho_{XE}^r = \rom{Tr}_B \left[
\sum_{x} (\ket{x}\bra{x} \otimes I) \psi^r_{ABE} 
   (\ket{x}\bra{x} \otimes I)
\right].
\end{eqnarray}
Let 
\begin{eqnarray*}
\ket{\psi^\prime_{ABER}} := 
   \sqrt{\lambda} \ket{\psi_{ABE}^1} \ket{1} + 
   \sqrt{1-\lambda} \ket{\psi_{ABE}^2} \ket{2} 
\end{eqnarray*}
be a purification of $\rho^\prime_{AB} := (\rom{id} \otimes \mathcal{E}_B^\prime)(\psi)$, where $\mathcal{H}_R$ is the reference system,
and $\{ \ket{1}, \ket{2} \}$ is an orthonormal basis of $\mathcal{H}_R$.
Let 
\begin{eqnarray}
\label{eq-cq-state-2}
\rho^\prime_{XER} :=
    \rom{Tr}_B \left[
\sum_{x} (\ket{x}\bra{x} \otimes I) \psi^\prime_{ABER} 
   (\ket{x}\bra{x} \otimes I)
\right],
\end{eqnarray}
and let 
\begin{eqnarray*}
\rho^{*}_{XER} &:=& \sum_{r \in \{1,2\}} 
   (I \otimes \ket{r}\bra{r}) \rho^\prime_{XER} 
   (I \otimes \ket{r}\bra{r}) \\
&=& \lambda \rho_{XE}^1 \otimes \ket{1}\bra{1}
    + (1 - \lambda) \rho_{XE}^2 \otimes \ket{2}\bra{2}
\end{eqnarray*}
be the density matrix such that  the system $\mathcal{H}_R$ is measured
 by $\{ \ket{1}, \ket{2}\}$ basis.
Then we have
\begin{eqnarray*}
\lefteqn{
H_{\rho^\prime}(X|ER) 
} \\
&=& H(X) - I_{\rho^\prime}(X; ER) \\
&\le& H(X) - I_{\rho^*}(X;ER) \\
&=& H_{\rho^*}(X|ER) \\
&=& \lambda H_{\rho^1}(X|E) + (1-\lambda) H_{\rho^2}(X|E),
\end{eqnarray*}
where the inequality follows from the monotonicity
of the quantum mutual information for measurements
(data processing inequality) \cite{hayashi-book:06}.
By renaming the systems $ER$ to $E$, we have the 
assertion of the lemma.
\end{proof}

\begin{remark}
\label{remark-extention}
By switching the role of Alice and Bob, we can show that the assertion in
Lemma \ref{proposition-convexity} and 
thus Proposition \ref{proposition-non-imaginary} hold for 
the reverse reconciliation.
Furthermore, the statements also hold for the information reconciliation
and the privacy amplification
with $k$-block-wise two-way processing \cite{gottesman:03, watanabe:07} 
(including one-way noisy  preprocessing \cite{kraus:05, renner:05}).
More precisely, let $\mathcal{N}_{X^k Y^k \to UV}$ be the
TPCP map that represents a two-way processing.
Then for the density matrix  
\begin{eqnarray*}
\rho_{UVE^k} := (\mathcal{N}_{X^k Y^k \to UV} \otimes \rom{id}_{E^k})
(\rho_{XYE}^{\otimes k}),
\end{eqnarray*}
we can obtain the inequality 
\begin{eqnarray*}
H_{\rho^\prime}(U|VE^k) \le \lambda H_{\rho^1}(U|VE^k)
  + (1- \lambda) H_{\rho^2}(U|VE^k).
\end{eqnarray*}
The modifications of the proof is to replace $\psi_{ABE}^r$ and
$\psi_{ABER}^\prime$ with $(\psi_{ABE}^r)^{\otimes k}$ and
$(\psi_{ABER}^\prime)^{\otimes k}$ in Eqs.~(\ref{eq-cq-state-1})
and (\ref{eq-cq-state-2}), to replace
the partial trace over Bob's system with Bob's measurement,
to append a map $\mathcal{N}_{X^k Y^k \to UV}$, and to replace
the measurement on the system $\mathcal{H}_R$ with 
the measurements on $\mathcal{H}_R^{\otimes k}$.
\end{remark}

\section{Proof of Proposition \ref{proposition-non-imaginary}}
\label{appendix-a}

The statement of the Proposition \ref{proposition-non-imaginary}
easily follows from Lemma \ref{proposition-convexity}.
For any channel $\mathcal{E}_B$,
let $\bar{\mathcal{E}}_B$ be the channel whose Choi matrix is
the complex conjugate of that for $\mathcal{E}_B$.
Note that eigenvalues of density matrices are unchanged
by the complex conjugate, and thus Eve's ambiguity
$H_{\bar{\rho}}(X|E)$ for $\bar{\mathcal{E}}_B$ equals to 
$H_\rho(X|E)$.
By applying Lemma \ref{proposition-convexity}
for $\mathcal{E}_B^1 = \mathcal{E}_B$, $\mathcal{E}_B^2 = \bar{\mathcal{E}}_B$,
and $\lambda = \frac{1}{2}$, 
we have 
\begin{eqnarray*}
H_{\rho^\prime} \le \frac{1}{2} H_{\rho}(X|E) + 
    \frac{1}{2} H_{\bar{\rho}}(X|E),
\end{eqnarray*}
where $\rho^\prime_{AB} = \frac{1}{2} \rho_{AB} + \frac{1}{2}
\bar{\rho}_{AB}$.
\textchange{ Note that $\rho^\prime_{AB}$ is
a real density matrix whose entries are equal to
the real components of $\rho_{AB}$, which implies   
that the parameters $R_{\san{zy}}$, $R_{\san{xy}}$, $R_{\san{yz}}$, $R_{\san{yx}}$,
and $t_{\san{y}}$, are $0$ by Eq.~(\ref{eq-choi-matrix}).
Since the channel $\mathcal{E}_B$ was arbitrary, we have
the assertion of the proposition. 
\qed }


\section{Proof of Lemma \ref{lemma-continuity}}
\label{appendix-b}

Since the conditional entropy is a continuous function,
the following statement is suffice for proving that
$F(\omega)$ is continuous function at any $\omega_0 \in \mathcal{P}$,
where $\mathcal{P}$ is the set of all $\omega$ such that
$\mathcal{P}(\omega)$ is not empty.
For any $\omega \in \mathcal{P}$ such that $\| \omega - \omega_0 \| \le
 \varepsilon$,
there exist $\varepsilon^\prime, \varepsilon^{\prime\prime} >0$ such that
\begin{eqnarray}
\label{eq-neighbor-1}
\mathcal{P}(\omega) &\subset& \mathcal{B}_{\varepsilon^\prime}(\mathcal{P}(\omega_0)), \\
\label{eq-neighbor-2}
\mathcal{P}(\omega_0) &\subset& \mathcal{B}_{\varepsilon^{\prime\prime}}(\mathcal{P}(\omega)),
\end{eqnarray}
and $\varepsilon^\prime$ and $\varepsilon^{\prime\prime}$ converge to
$0$ as $\varepsilon$ goes to $0$,
where $\mathcal{B}_{\varepsilon^\prime}(\mathcal{P}(\omega_0))$ is the
$\varepsilon^\prime$-neighbor of the set $\mathcal{P}(\omega_0)$.

Define the set  $\mathcal{Q} := \{(\omega, R_\san{yy}) \mid \omega \in \mathcal{P}, R_\san{yy}
 \in \mathcal{P}(\omega) \}$, which is a closed convex set.
Define functions 
\begin{eqnarray*}
U(\omega) &:=& \max_{R_\san{yy} \in \mathcal{P}(\omega)} R_\san{yy}, \\
L(\omega) &:=& \min_{R_\san{yy} \in \mathcal{P}(\omega)} R_\san{yy}
\end{eqnarray*}
as the upper surface and the lower surface of the set $\mathcal{Q}$ respectively.
Then $U(\omega)$ and $L(\omega)$ are concave and convex functions respectively, 
because $\mathcal{Q}$ is a convex set.
Thus $U(\omega)$ and $L(\omega)$ are continuous functions except
the extreme points of $\mathcal{P}$.
For any extreme point $\omega^\prime$ and for any interior point
 $\omega$,
we have $U(\omega) \ge U(\omega^\prime)$ and 
$L(\omega) \le L(\omega^\prime)$, because
$\mathcal{Q}$ is a convex set.
Since $\mathcal{Q}$ is a closed set, we have
$\lim_{\omega \to \omega^\prime} U(\omega) \in \mathcal{P}(\omega^\prime)$
and $\lim_{\omega \to \omega^\prime} L(\omega) \in \mathcal{P}(\omega^\prime)$,
which implies that $U(\omega^\prime) = \lim_{\omega \to \omega^\prime}
 U(\omega)$ 
and $L(\omega^\prime) = \lim_{\omega \to \omega^\prime}
 L(\omega)$.
Thus $U(\omega)$ and $L(\omega)$ are also continuous at the extreme
 points.
Since $\mathcal{P}(\omega)$ is a convex set, the continuity
of $U(\omega)$ and $L(\omega)$ implies that
Eqs.~(\ref{eq-neighbor-1}) and (\ref{eq-neighbor-2})
hold for some $\varepsilon^\prime, \varepsilon^{\prime\prime} >0$,
and $\varepsilon^\prime$ and $\varepsilon^{\prime\prime}$ converge
to $0$ as $\varepsilon$ goes to $0$.
\qed

\section{Proof of Proposition \ref{remark-unital}}
\label{appendix-d}

\textchange{ By Proposition \ref{proposition-non-imaginary},
it suffice to consider the channel $\mathcal{E}_B$ of the form }
\begin{eqnarray*}
\left[ \begin{array}{c}
\theta_{\mathsf{z}} \\ \theta_{\mathsf{x}} \\ \theta_{\mathsf{y}}
\end{array} \right]
\mapsto
\left[ \begin{array}{ccc}
R_{\mathsf{z}\mathsf{z}} & R_{\mathsf{z}\mathsf{x}} & 0 \\
R_{\mathsf{x}\mathsf{z}} & R_{\mathsf{x}\mathsf{x}} & 0 \\
0 & 0 & R_{\mathsf{y}\mathsf{y}}
\end{array}
\right]
\left[ \begin{array}{c}
\theta_{\mathsf{z}} \\ \theta_{\mathsf{x}} \\ \theta_{\mathsf{y}}
\end{array} \right]
+  
\left[ \begin{array}{c}
t_{\mathsf{z}} \\ t_{\mathsf{x}} \\ 0
\end{array} \right].
\end{eqnarray*}
\textchange{ Define the channel $\mathcal{E}_B^-(\varrho) := \sigma_\san{y} ( \mathcal{E}_B(\sigma_\san{y} \varrho \sigma_\san{y}))\sigma_\san{y}$ and the mixed channel $\mathcal{E}_B^\prime := \frac{1}{2} \mathcal{E}_B + \frac{1}{2} \mathcal{E}_B^-$.
Since the channel $\mathcal{E}_B^-$ is given by }
\begin{eqnarray*}
\left[ \begin{array}{c}
\theta_{\mathsf{z}} \\ \theta_{\mathsf{x}} \\ \theta_{\mathsf{y}}
\end{array} \right]
\mapsto
\left[ \begin{array}{ccc}
R_{\mathsf{z}\mathsf{z}} & R_{\mathsf{z}\mathsf{x}} & 0 \\
R_{\mathsf{x}\mathsf{z}} & R_{\mathsf{x}\mathsf{x}} & 0 \\
0 & 0 & R_{\mathsf{y}\mathsf{y}}
\end{array}
\right]
\left[ \begin{array}{c}
\theta_{\mathsf{z}} \\ \theta_{\mathsf{x}} \\ \theta_{\mathsf{y}}
\end{array} \right]
+  
\left[ \begin{array}{c}
- t_{\mathsf{z}} \\ - t_{\mathsf{x}} \\ 0
\end{array} \right],
\end{eqnarray*}
$\mathcal{E}_B^\prime$ is a unital channel and the matrix part of $\mathcal{E}_B$
and $\mathcal{E}_B^\prime$ are the same. Furthermore, since $H_\rho(X|E)$ for
$\mathcal{E}_B$ equals to $H_{\rho^-}(X|E)$ for $\mathcal{E}_B^-$,
by using Lemma \ref{proposition-convexity}, we have 
\begin{eqnarray*}
H_\rho(X|E) \ge H_{\rho^\prime}(X|E).
\end{eqnarray*}

\textchange{ The rest of the proof is to calculate the minimization of $H_{\rho^\prime}(X|E)$
with respect to $R_\san{yy}$. By the singular value decomposition, we can decompose
the matrix $R^\prime$ corresponding to the channel $\mathcal{E}^\prime_B$ as}
\begin{eqnarray*}
O_2 \left[ \begin{array}{ccc}
\tilde{d}_\san{z} & 0 & 0 \\
0 & \tilde{d}_\san{x} & 0 \\
0 & 0 & R_\san{yy} 
\end{array} \right] O_1,
\end{eqnarray*}
\textchange{ where $O_1$ and $O_2$ are some rotation matrices within $\san{z}-\san{x}$-plane, and 
$|\tilde{d}_\san{z}|$ and $|\tilde{d}_\san{x}|$ are the 
singular value of the matrix in Eq.~(\ref{eq-matrix}).
Then, we have}
\begin{eqnarray*}
\lefteqn{ \min_{R_\san{yy}} H_{\rho^\prime}(X|E) } \\
&=& \min_{R_\san{yy}} \left[
1 - H(\rho_{AB}^\prime) + \sum_{x \in \mathbb{F}_2} \frac{1}{2} 
H(\mathcal{E}_B^\prime(\ket{x}\bra{x})) \right] \\
&=& 1 - \max_{R_\san{yy}} H[q_\san{i}, q_\san{z}, q_\san{x}, q_\san{y}] 
 + h\left(\frac{1 + \sqrt{R_\san{zz}^2 + R_\san{xz}^2}}{2} \right) \\
&=& 1 - h(q_\san{i} + q_\san{z}) - h(q_\san{i} + q_\san{x}) 
 + h\left(\frac{1 + \sqrt{R_\san{zz}^2 + R_\san{xz}^2}}{2} \right),
\end{eqnarray*}
\textchange{ where $(q_\san{i}, q_\san{z}, q_\san{x}, q_\san{y})$ are
the eigenvalues of the Choi matrix $\rho_{AB}^\prime$.
By noting that $q_\san{i} + q_\san{z} = \frac{1 + \tilde{d}_\san{z}}{2}$
and $q_\san{i} + q_\san{x} = \frac{1 + \tilde{d}_\san{x}}{2}$
(see Section \ref{example2}), we
have assertion of the proposition. \qed } 

\section{Convex Optimization}
\label{appendix-convex-opt}

In this appendix, we briefly explain how to apply a convex optimization
method, the interior-point method, to the channel estimation in
the BB84 protocol. In a similar manner, we can apply the interior-point method 
to the channel estimation in the six-state protocol.
For more detail, see the textbook 
\cite[Section 11.6]{boyd-book:04}.

First, we define a generalized inequality.
Since the set $K \subset \mathbb{R}^{4 \times 4}$ of (real) semi-definite matrices is 
a proper cone (see \cite[Section 2.4.1]{boyd-book:04} for the definition of the 
proper cone), we can define a generalized inequality $\preceq_K$ as
\begin{eqnarray*}
M \preceq_K N \Longleftrightarrow N - M \in K.
\end{eqnarray*}

For a given parameter $(\omega, R_\san{yy}) \in \mathbb{R}^7$, we define
the real matrix $\rho(\omega,R_\san{yy}) \in \mathbb{R}^{4 \times 4}$ by
using the relation in Eq.~(\ref{eq-choi-matrix}),
where we set other parameters 
$(R_{\san{zy}}, R_{\san{xy}}, R_{\san{yz}}, R_{\san{yx}}, t_{\san{y}})$ to be
all $0$.  Then, 
the function $\rho: \mathbb{R}^7 \to \mathbb{R}^{4 \times 4}$
is a $K$-concave function (see \cite[Section 3.6.2]{boyd-book:04} for the definition
of the $K$-concave function).

We can formulate our optimization problem as follows:
\begin{eqnarray*}
\begin{array}{ll}
\mbox{minimize} & \| \hat{\omega} - \tilde{\omega} \|^2 \\
\mbox{subject to} & \rho(\hat{\omega},R_\san{yy}) \succeq_K 0, \\
 & \rom{Tr}_B[ \rho(\hat{\omega},R_\san{yy}) ] = I,
\end{array}
\end{eqnarray*}
where $\| \cdot \|^2$ is the square Euclidean norm, which is
a convex function, and $I$ is the $2 \times 2$ identity matrix.
This optimization problem can be solved by the interior-point
method. Note that we can use $\log \det \rho(\hat{\omega},R_\san{yy})$ as a logarithmic 
barrier function (see \cite[Example 11.7]{boyd-book:04}). 


\end{document}